\def\eqref#1{equation~\ref{#1}}
\def\1{\bm{1}}
\DeclareMathAlphabet{\mathsfit}{\encodingdefault}{\sfdefault}{m}{sl}
\SetMathAlphabet{\mathsfit}{bold}{\encodingdefault}{\sfdefault}{bx}{n}
\title{Batch Entanglement Detection in Parameterized Qubit States using Classical Bandit Algorithms}
\author{\name K. Bharati \email ee20d700@smail.iitm.ac.in \\
      \addr Department of Electrical Engineering, IIT Madras\\
      Chennai, India
      \AND
      \name Vikesh Siddhu \email vsiddhu@protonmail.com \\
      \addr IBM Quantum, IBM Research India 
      \AND
      \name Krishna Jagannathan \email krishnaj@ee.iitm.ac.in\\
      \addr Department of Electrical Engineering, IIT Madras \\
       Chennai, India}
\begin{document}

\maketitle

\begin{abstract}
Entanglement is a key property of quantum states that acts as a resource for a wide range of tasks in quantum computing. Entanglement detection is a key conceptual and practical challenge. Without adaptive or joint measurements,  entanglement detection is constrained by no-go theorems~\citep{tomography2016no-go}, necessitating full state tomography. Batch entanglement detection refers to the problem of identifying all entangled states from amongst a set of $K$ unknown states, which finds applications in quantum information processing. We devise a method for performing batch entanglement detection by measuring a single-parameter family of entanglement witnesses, as proposed by \citet{mintomography}, followed by a thresholding bandit algorithm on the measurement data. The proposed method can perform batch entanglement detection conclusively when the unknown states are drawn from a practically well-motivated class of two-qubit states $\mathcal{F}$, which includes Depolarised Bell states, Bell diagonal states, etc. Our key novelty lies in drawing a connection between batch entanglement detection and a Thresholding Bandit problem in classical Multi-Armed Bandits (MAB). The connection to the MAB problem also enables us to derive theoretical guarantees on the measurement/sample complexity of the proposed technique. We demonstrate the performance of the proposed method through numerical simulations and an experimental implementation. More broadly, this paper highlights the potential for employing classical machine learning techniques for quantum entanglement detection.
\end{abstract}

%%%%%%%%%%%%%%%%%%%%%%%%%%%%%%%%%%%%%%%%%%%%%%%

\section{Introduction}
\label{sec:intro}

Quantum information theory has redefined quantum entanglement from a descriptive property of quantum states to a fundamental non-classical resource. As the basis for applications such as quantum communication, teleportation, and information processing~\citep{BennettBrassardEA93, BuhrmanCleveEA01a, HorodeckiHorodeckiEA09}, entanglement detection and verification are central problems in quantum information science. Traditionally, this involves performing quantum measurements that yield probabilistic data, enabling techniques like full-state tomography (FST) for state reconstruction. However, this faces two challenges: theoretically, even after FST, determining entanglement remains computationally intractable, and this limitation is even more pronounced in scenarios involving many qubits; practically, real-world noise and imperfections limit the accuracy of state reconstruction. Modern multi-qubit compute systems may generate a bunch of entangled states across different sets of qubits through quantum gate operations; however, gate noise (e.g., phase flip errors, depolarization) can compromise their entanglement, requiring precise verification to ensure their reliability before use in applications such as quantum computation and communication \citep{HONG20102644}. In such contexts, FST can be employed for entanglement detection. However, it comes with a high computational burden, which may be unnecessary. We propose an alternative approach for simultaneously verifying or detecting entanglement among a given set of quantum states, dubbed \textit{batch entanglement detection}. Instead of relying on FST, learning algorithms utilize statistical patterns to simultaneously analyze measurement data from a batch of quantum states and provide high probability guarantees on what they learn, i.e., whether or not the states are entangled.

Conventional techniques for learning quantum states include extensive research on FST~(see~\citet{KUENG201788,conf_polytopes, ODonnell2015EfficientQT, ODonnell2015EfficientQT2, Banaszek_2013, Flammia_2012} and references therein and also see~\citet{Guta_2020, Torlai_2018, quek2018adaptive, Koutn__2022, Schmale_2022,franca_et_al:LIPIcs.TQC.2021.7} for machine learning-based approaches). Measurements required for FST scale exponentially with the number of qubits. While entangled measurements enable near-optimal copy complexity for FST~\citep{ODonnell2015EfficientQT2, Haah_2017}, practical implementations rely on single-copy measurements, utilizing reconstruction methods such as linear inversion, maximum likelihood estimation, and maximum a posteriori estimation~\citep{MLE2011, Siddhu_2019}. These reconstructed states can be tested for entanglement using well-known criteria ~(some are outlined in Sec .~\ref {subsec:sep-criteria}).  
Alternatively, entanglement can be detected by measuring {\em entanglement witnesses}~\citep{HORODECKI19961, TERHAL2000319, PhysRevA.62.052310, Chru_ci_ski_2014}, observables that detect {\em some} entangled states. No single witness can detect all entangled states, but in the worst case, combining information obtained from measuring different witnesses aids in state reconstruction via FST. This is explored in \citet{mintomography}, where measurement operators from a family of six witnesses are used for bipartite qubit systems. The proposed approach for entanglement detection involves measuring a witness and formulating a \textit{separability criterion} based on the frequencies of measurement outcomes. A negative value of the criterion indicates entanglement; otherwise, the process is repeated with another witness. If the state remains undetected by all witnesses, a tomographic reconstruction is performed (see Section \ref{subsec:sep-criteria} for further details). 

Recently, the authors of \citet{Lumbreras_2022} proposed using multi-armed bandit (MAB) frameworks for learning quantum states. 
The MAB algorithm repeatedly chooses from several options ("arms"), with the goal of finding the arm with the best outcome (the "best arm").
The algorithm balances between exploiting the known best options and exploring others to ensure no better option is missed (more details on MAB and policies can be found in Sec.~\ref{subsec:bandits}). 
In \citet{Lumbreras_2022}, the inherent linearity in the quantum mechanical description of states is capitalized and a well-known classical learning algorithm that prescribes a sequential order of choosing measurements is employed. The MAB algorithms that are used provide guarantees on the quality of the estimate of the unknown quantum state. This MAB model in \citet{Lumbreras_2022} does not directly apply to batch entanglement detection. Instead, it focuses on learning \textit{one} entire quantum state, which may be unnecessary for entanglement detection.

Our first contribution builds on the witness-based separability criterion in \citet{mintomography} and uses suitable MAB policies for learning the same. We establish a formal connection between batch entanglement detection and the thresholding bandit problem (TBP) \citep{kano2019GAI}, enabling accurate and quick identification of $m$ entangled states from a batch of $K$ candidate states through adaptive measurement allocation. This formulation, which we refer to as the $(m, K)$-quantum MAB framework, differs structurally from the setting in \citet{Lumbreras_2022} (see Remark \ref{rem:tomamichel-model}) and focuses on learning entanglement-specific metrics without requiring full state reconstruction. Our second contribution uses classical MAB policies for adaptive measurement allocation and provides explicit measurement/copy complexity guarantees for batch entanglement detection--guarantees absent in FST and repeated witness testing~\cite{mintomography}. Using statistically guided confidence bounds, these policies are sample-efficient since they prioritise measurement effort only on uncertain states. Finally, we demonstrate the complete MAB-based pipeline for batch entanglement detection across multiple IBM Quantum backends, validate the framework under realistic noise and benchmark our results with non-adaptive tomographic approaches.

%For a given witness, accurately estimating the separability criterion requires measurements on \textit{numerous} copies of each of the $K$ unknown states. FST does not provide an obvious guarantee on measurement/sample complexity for entanglement detection. Our second contribution addresses this, i.e., for a given error margin, the MAB policies identify specific `winning' trends dictated by parameter estimates. On a high level, the trends specify \textit{which} state to measure and provide sample complexity guarantees with high probability. 

The rest of this paper is organized as follows: %\sout{In Section \ref{sec:rel-work-prelim}, we provide a brief recap of some preliminary concepts. In Section \ref{sec:class-quant}, we describe the $(m, K)$-quantum Multi-Armed Bandit framework for entanglement detection.} 
In Sec.~\ref{sec:rel-work-prelim}, we provide a brief recap of some preliminary concepts in entanglement theory and multi-armed bandits. Readers interested in our connection between the two can move directly to Sec.~\ref{sec:class-quant}, where we describe the $(m, K)$-quantum Multi-Armed Bandit framework for entanglement detection.
We define a class of parameterized two-qubit states $\mathcal{F}$ and identify measurement operators that conclusively detect entanglement in $\mathcal{F}$, detailed in Sections \ref{subsec:werner}, \ref{subsec:bell-diag}, and \ref{subsec:ampl-damp-werner}. In Section \ref{sec:solution}, we demonstrate two TBP policies for entanglement detection. Section \ref{sec:experiments} analyzes the MAB policy performance on IBMQ backends and on an ibm-brisbane device for a family of states in $\mathcal{F}$ and details the quantum circuits used for simulation. Section \ref{sec:random-ent-det} highlights measurement scheme limitations for entanglement detection in arbitrary states through numeric examples. In Section~\ref{sec:discussion}, we contextualize the numerical performance gains and discuss the practical advantages of the proposed MAB approach in comparison to existing state-of-the-art methods for entanglement detection like FST and fixed-witness testing \citep{mintomography}. Finally, Section \ref{sec:future-and-concl} concludes the paper. Details on the non-adaptive tomography baseline and proofs of the results are presented in the paper and in Appendix \ref{sec:supp-mat}.

%%%%%%%%%%%%%%%%%%%%%%%%%%%%%%%%%%%%%%%%%%%%%%%
\section{Preliminaries}
\label{sec:rel-work-prelim}

Let $\mathcal{H}$ be a finite-dimensional Hilbert space with dimension $d$. A pure quantum state is represented by a unit norm vector $\ket{\psi} \in \mathcal{H}$. Let $\mathcal{L}(\mathcal{H})$ be the space of linear operators on $\mathcal{H}$, the Frobenius inner product for any $A,B \in \mathcal{L}(\mathcal{H}$),
$\langle A, B \rangle \coloneqq \Tr(A^{\dagger}B)$ where $\dagger$ represents conjugate transpose. A Hermitian operator satisfies $H = H^{\dagger}$. A density operator
$\rho \in \mathcal{L}(\mathcal{H})$ is Hermitian, positive semi-definite, $\rho \geq 0$, and has unit trace, $\Tr(\rho) = 1$; it can represent both pure and mixed states. A positive operator value measure~(POVM) is a collection of positive operators 
$\{ E_i \geq 0 \}$ that sum to the identity, $\sum_i E_i = I$. A POVM represents a measurement where $E_i$ corresponds to measurement outcome $i$, but sometimes we compress this and just say $E_i$ is a measurement outcome.

Let $\mathcal{H}_a$ and
$\mathcal{H}_b$ be finite-dimensional Hilbert spaces with dimensions $d_a$ and
$d_b$, respectively,
and $\mathcal{H}_{ab} \coloneqq \mathcal{H}_a \otimes \mathcal{H}_b$, where $\otimes$ represents tensor product, be a bipartite Hilbert space with dimension $d = d_a d_b$. A density operator $\rho_{ab} \in \mathcal{L}(\mathcal{H}_{ab})$ is called {\em separable} if it can be written as a convex combination of product states, that is,
\begin{equation}
    \rho_{ab} = \sum_i p_i \ket{\phi^i_a, \chi^i_b} \bra{\phi^i_a, \chi^i_b},
    \label{eq:sep-states}
\end{equation}
where $p_i \geq 0$ such that $\sum_i p_i = 1$ and $\ket{\phi^i_a, \chi^i_b} \coloneqq \ket{\phi}^i_a \otimes \ket{\chi}^i_b$ is a  product of two pure states. We denote the convex set of all separable states by $S_{ab}$. Conversely, $\rho_{ab}$ is {\em entangled} if it can not be written in the form~\eqref{eq:sep-states}. We discuss some preliminaries on separability criteria for entanglement detection in Section \ref{subsec:sep-criteria} and background on stochastic multi-armed problems in Section \ref{subsec:bandits}.

\subsection{Separability Criteria for Entanglement Detection}
\label{subsec:sep-criteria}

\subsubsection{Standard Analytical Separability Tests} 
Using full state tomography (FST), one can reconstruct any bipartite qubit state $\rho_{ab}$ and verify its entanglement through standard separability criteria~\citep{HorodeckiHorodeckiEA09}. Notably, the Peres-Horodecki criterion (also called the PPT criterion) \citep{Horodecki1996PPT, Peres1996PPT} establishes that $\rho_{ab}$ is separable if and only if the eigenvalues of its partial transpose $\rho_{ab}^{\top_b}$ are non-negative. Here, $\top_b$ is the partial transpose with respect to $\mathcal{H}_b$. This condition remains necessary and sufficient for $(2\times 3)$ systems but fails in higher dimensions due to the existence of bound-entangled PPT states. Other separability criteria include the range criterion \citep{Horodecki1997Range}, the matrix realignment criterion \citep{Rudolph2000MatRealign}, the covariance matrix (CM) criterion \citep{G_hne_2007}, and additional methods discussed in \citet{gurvits2003classical,extrasepcriteria}.

\subsubsection{Entanglement Witness-based Separability Criterion}
\label{subsubsec:WBM-related-back}
Entanglement can be detected by measuring entanglement witnesses, which can be defined as follows:
\begin{definition} [Entanglement Witness]
    An entanglement witness $W \in \mathcal{L}(\mathcal{H}_{ab})$ is a Hermitian operator satisfying,
    \begin{align}
        & \langle \rho_{ent}, W \rangle =  \Tr(\rho_{ent} W) < 0, \quad \text{for some entangled } \rho_{\mathrm{ent}}, \\
        & \langle \rho, W \rangle =  \Tr(\rho W) \ge 0, \ \forall \rho \in S_{ab}.
        \label{eq:ent-witness}
    \end{align}
\end{definition}

Geometrically, a witness $W$ defines a hyperplane in the state space, delineating the set of \textit{detectable entangled states} $D_W = \{ \rho \ \text{s.t.} \ \Tr(\rho W) < 0\}$ from all separable states. 
For two arbitrary witnesses $W_1$ and $W_2$, $W_2$ is said to be \emph{finer} than $W_1$ if $D_{W_1} \subseteq D_{W_2}$. 
A witness is said to be {\em optimal} when no strictly finer one exists, implying that it lies tangent to the boundary of the convex set $S_{ab}$ \citep{Bengtsson_Zyczkowski_2006_geometry}. Further insights into this topology are detailed in \citet[Lemma 1]{optentwitness}.

\paragraph{Single-parameter witness family:} We briefly review the witness-based separability criterion from \citet{mintomography}. The authors propose a single-parameter witness family, 
    \begin{equation}
        \rho_w(\alpha) = \cos^2{\alpha} I - \left(\ket{\psi}\bra{\psi}\right)^{\top_b},
        \label{eq:witness-family}
    \end{equation}
where $\ket{\psi} = \cos\alpha\ket{00} + \sin\alpha\ket{11}$ such that $\alpha \in [0, \pi/4]$. We denote $\mathcal{E}$ to be the set of projectors onto the eigenstates of
    \begin{equation*}
        \rho(\alpha)^{\top_b} = \left(\ket{\psi}\bra{\psi}\right)^{\top_b} = \frac{1 + \cos 2\alpha}{2}\ket{00}\bra{00} + \frac{1 - \cos 2\alpha}{2}\ket{11}\bra{11} + \frac{\sin 2\alpha}{2}\big(\ket{\Psi^+}\bra{\Psi^+} - \ket{\Psi^-}\bra{\Psi^-}\big).
    \end{equation*}
The set $\mathcal{E} = \{\ket{00}\bra{00}, \ket{11}\bra{11}, \ket{\Psi^+}\bra{\Psi^+}, \ket{\Psi^-}\bra{\Psi^-}\}$ forms a POVM and is referred to as a \textit{Witness Basis Measurement} (WBM). For the remainder of the paper, we assume that the exact projective forms of the WBM are fixed and known.
%\sout{Each operator $E_i \in \mathcal{E}$ satisfies $E_i = E_i^{\dagger}$, $E_i \ge 0$, and $\sum_i E_i = I$, forming a Positive Operator-Valued Measure (POVM). Throughout the paper, we refer to this POVM as a Witness Basis Measurement (WBM).}

\paragraph{Quadratic WBM criterion:} The witness expectation value serves as a detection statistic, that is, $\Tr(\rho W) < 0$ certifies entanglement, while non-negativity renders the test inconclusive. If the test is inconclusive for the base witness in \eqref{eq:witness-family}, that is, $\Tr(\rho_w(\alpha)\rho) \ge 0$, then subsequent witnesses are generated via local unitary transformations $U_1$ and $U_2$ as,
\begin{equation}
    \rho_w(\alpha) \longrightarrow (U_1 \otimes U_2)^\dagger \rho_w(\alpha) (U_1 \otimes U_2).
    \label{eq:basis-change-witness}
\end{equation}
with $(U_1,U_2) \in \left\{{(I,I),(I,X),(C^\dagger,C),(C^\dagger,XC),(C,C^\dagger),(C,XC^\dagger)}\right\}$. Here, the operator $C$ cyclically permutes the Pauli operators $X$, $Y$ and $Z$, satisfying that $CX{=}YC$, $CY{=}ZC$, $CZ{=}XC$. Instead of performing a negativity test, the authors in \citet{mintomography} adopt a more stringent criterion:
\begin{equation}
    \min_\alpha  \Tr{ \rho_{\text{sep}} \left(\cos^2{\alpha} I - \rho_w(\alpha) \right) } \ge 0, \ \ \forall \rho_{\text{sep}} \in S_{ab}.
    \label{eq:witness-opt1}
\end{equation} The criterion is violated by the set of entangled states that {\em can} be detected by this witness family. The above optimisation leads to the following quadratic WBM criterion,
\begin{equation}
    S = 4f_1f_2 - (f_3 - f_4)^2 \ge 0, \ \ \forall \rho_{\text{sep}} \in S_{ab}.
    \label{eq:quad-WBM}
\end{equation}
where $f_i \coloneqq \Tr{E_i \rho}$ are probabilities obtained from WBM $\mathcal{E}$. The value of $S$  in \eqref{eq:quad-WBM} depends on the underlying WBM. Thus, for a WBM $\mathcal{E}$ and state $\rho$, we denote \eqref{eq:quad-WBM} as $S_\mathcal{E}(\rho)$.
We note that measuring the witness basis provides estimates for a distinct set of observables. For instance, the base witness in \eqref{eq:witness-family} yields estimates for three observables: $ZI + IZ$, $ZZ$, and $XX + YY$. The six-witness ensemble in total provides 15 independent expectation values, which provide sufficient information about the two-qubit state. Thus, if the state is undetected by the family of six witnesses, it can be reconstructed using these expectation values. 
%Additionally, there are practical benefits to this approach since the measurement effort is reduced from $16$ tomographic settings to $6$ structured witness configurations.  
%In essence, the process of measuring the family entanglement witnesses $\rho_w(\alpha)$ corresponds to measuring the projectors onto the eigenstate basis. 

\subsection{Fixed-Confidence Multi-Armed Bandit Policies}
\label{subsec:bandits}

In this section, we briefly review some fixed-confidence policies for Best Arm Identification (BAI) and Good Arm Identification (GAI) in the stochastic Multi-Armed Bandit (MAB) setting, a canonical framework for sequential decision-making problems under uncertainty. A bandit instance $\boldsymbol{\mu}$ (problem instance) comprises $K$ arms, each described by a reward distribution $\nu_i$ supported on $\mathbb{R}$ with unknown mean $\mu_i$. In each round $t$, the learner chooses an arm $X_t$, receives an independent reward $Z_t \sim \nu_{X_t}$, and chooses the subsequent action based on a specified policy. 
We consider the class of $\delta$-correct policies, i.e., given a \textit{fixed} error $\delta$ and problem instance $\boldsymbol\mu$, such policies terminate in finite time and return the correct solution with probability at least $1-\delta$. Mathematically, a $\delta$-correct policy $\pi = (\tau, \phi)$ satisfies:
    \begin{equation}
        \mathbb{P}_{\boldsymbol{\mu}}^{\pi}(\tau < \infty) = 1, \quad \mathbb{P}_{\boldsymbol{\mu}}^{\pi}(\phi = \phi_{\text{true}}) \ge 1-\delta,
    \end{equation}
where $\tau$ is the stopping time, $\phi$ is the learner's final recommendation, and $\phi_{\text{true}}$ is the actual correct answer for the specific problem (BAI or GAI).
Below, we briefly review well-known MAB policies that operate under \textit{fixed-confidence} guarantees, aiming to make statistically reliable recommendations while minimising the number of samples. 

\subsubsection{Fixed-Confidence Best Arm Identification}

In the BAI problem, the learner's objective is to identify the (best) arm $i^\star = \arg\max_{i \in [K]} \mu_i$ with the largest expected reward. Without loss of generality, arms are enumerated based on their expected reward $\mu_1 > \mu_2 \ge \cdots \ge \mu_K$ and the sub-optimality gap of arm $i$ is given by $\Delta_i = \mu_{1} - \mu_i$. 
%
\iffalse
We consider the class of \emph{$(\epsilon,\delta)$-PAC} (Probably Approximately Correct) policies. Given a fixed probability of error $\delta$ and a tolerance in sub-optimality $\epsilon$, the policies ensure that the learner’s recommendation is $\epsilon$-suboptimal in expected reward with probability at least $1-\delta$. Throughout this paper, we operate in the \emph{exact-correctness} regime by setting $\epsilon = 0$, requiring the learner to provide an exactly correct recommendation with confidence at least $ 1- \delta$. For brevity, we refer to these as $\delta$-PAC policies. Formally, a BAI policy is said to be $\delta$-PAC if it satisfies,
\begin{equation}
     \mathbb{P}_{\boldsymbol\mu}(\hat{i}_\tau \ne i^\star ) \le \delta,  \quad \mathbb{P}_{\boldsymbol\mu}(\tau < \infty) = 1.
    \label{eq:delta-PAC}
\end{equation}
where $\hat{i}_\tau$ is the best arm recommended at stoppage $\tau$. 
\fi
%
The performance of BAI policies is primarily characterised by the expected stopping time $\mathbb{E}_{\boldsymbol{\mu}}[\tau]$, which represents the expected number of samples required to recommend a best arm with confidence $1-\delta$. The sample complexity improves progressively across algorithms: $\mathcal{O}(\Delta^{-2}\log(n\Delta^{-2}))$ for Successive Elimination \citep{evendar2002SUCCELIM} and $\mathcal{O}(\Delta^{-2}\log\Delta^{-2})$ for LUCB~\citep{Kalyanakrish2012PACsubset}. 
Building upon this, Exponential-Gap Elimination~\citep{karnin2013nearoptimal} and lil'UCB~\citep{jamieson2014LILUCB} utilize the law of the iterated logarithm (LIL) bounds to reach the near-optimal complexity of $\mathcal{O}(\Delta^{-2}\log\log\Delta^{-2})$, bridging the gap to the theoretical lower bounds~\citep{Farrell1964LB, Mannor2004LB}.
%These results approach the theoretical lower bound of $\mathcal{O}(\Delta^{-2})$ \citep{Mannor2004LB}, differing only by logarithmic factors. Notably, \citet{Farrell1964LB} bridges this gap, proving that $\mathcal{O}(\Delta^{-2} \log\log \Delta^{-2})$ samples suffice to identify the best arm with error probability $\delta$. 
\subsubsection{Fixed-Confidence Good Arm Identification}

The GAI problem generalises BAI by introducing a threshold $\zeta$ and defining the set $\mathcal{G} = \{\, i \in [K]: \mu_i \ge \zeta \,\}$ of "good" arms with unknown cardinality $|\mathcal{G}| = m$, leading to the $(m, K)$-GAI formulation. Without loss of generality, assume $\mu_1 > \mu_2 \geq \ldots \geq \mu_m \geq \zeta \geq \mu_{m+1} \ldots \geq \mu_K$ and the learner is unaware of this indexing. 
%
%Unlike the BAI setting, the GAI problem admits no notion of approximate correctness, since each arm’s mean reward is either above or below the fixed threshold $\zeta$. 
%
Fixed-confidence GAI policies adapt the notion of \emph{$(\lambda,\delta)$-correctness} ~\citep{kano2019GAI}. A GAI policy is said to be $(\lambda,\delta)$-correct if, with probability at least $1-\delta$, it correctly identifies at least $\lambda$ true good arms and does not misclassify any bad arm. 
Here, $\lambda$ specifies the number of correctly identified good arms. Unlike BAI, the sub-optimality gaps are denoted by $\Delta_i \coloneqq |\mu_i - \zeta|$ and $\Delta_{i,j} = \mu_i - \mu_j$ and the sample complexity is expressed in terms of $\Delta = \min(\min_{i \in [K]} \Delta_i, \min_{j \in [K-1]} \Delta_{j,j+1}/2)$.
 
The goal, as in BAI, is to minimise the expected stopping time $\mathbb{E}_{\boldsymbol{\mu}}[\tau]$. However, a key difficulty in GAI is the exploration-exploitation dilemma of confidence, where the learner explores arms other than the empirical best arm to identify potentially `good' arms with fewer pulls, while simultaneously exploiting the empirical best arm to increase confidence in its classification as a good arm. The Hybrid Dilemma-of-Confidence (HDoC) algorithm~\citep{kano2019GAI} combines UCB-based exploration \citep{auer2002Regret} with LUCB-based elimination \citep{Kalyanakrish2012PACsubset}, achieving sample complexity $\mathcal{O}\left(\Delta^{-2}\left( K\log\frac{1}{\delta} + K \log K + K \log \frac{1}{\Delta}\right)\right)$. The LIL-based refinement \cite{tsai2024lilhdoc} lil'HDoC, employs tighter confidence widths to achieve $\mathcal{O}\left(\Delta^{-2}\left( K\log\frac{1}{\delta} + K \log K + K \log \log \frac{1}{\Delta}\right)\right)$ samples, the best-known order for fixed-confidence GAI policies. The specific connections between BAI/GAI and entanglement detection are elaborated in Section \ref{sec:class-quant} and \ref{sec:solution}.

%%%%%%%%%%%%%%%%%%%%%%%%%%%%%%%%%%%%%%%%%%%%%%%
\section{The Quantum MAB Framework For Entanglement Detection}
\label{sec:class-quant}

In this section, we introduce the quantum Multi-Armed Bandit (MAB) framework for batch entanglement detection. We formalise the structural similarity between the stochastic MAB model and its quantum analogue, where the learner interacts with a batch of quantum states by performing structured measurements.

\subsection{Stochastic-Quantum MAB Mapping}
In the stochastic MAB setting, pulling an arm $i$ corresponds to sampling from a probability distribution $p_i(\cdot)$ with known support and unknown mean $\mu_i$. Each pull yields a reward $j$ with probability~(w.p.) $p_i(j)$ and rewards across arm pulls are independent and identically distributed (i.i.d.). Analogously, in the quantum setting, each arm represents an unknown quantum state $\rho$. When $\rho$ is measured, the outcome distribution is determined by the fixed WBM. Specifically, if a WBM $\mathcal{E}$ is chosen, measuring $\mathcal{E}$ on $\rho$ will result in a outcome $j \in \{1, 2, 3, 4\}$ with probability $\Tr(\rho E_j)$. Once the measurement is fixed, repeated measurements of $\rho$ yield i.i.d. outcomes. The key distinction lies in the source of the rewards: in the stochastic MAB model, rewards are sampled from classical distributions, whereas in the quantum MAB model, the rewards depend on i.i.d. outcomes obtained by measuring $\mathcal{E}$ on $\rho$. 

\subsection{Problem Setting and Objective}
Given a batch of $K$ unknown quantum states $\{\rho_1, \ldots, \rho_K\}$, of which an unknown subset $m < K$ states are entangled, the learner's objective is to correctly identify all entangled states while minimizing the total number of measurements performed. Given a fixed WBM $\mathcal{E}$, the goal is to estimate the quadratic WBM criterion $S_\mathcal{E}(\rho_i)$ which indicates that $\rho_i$ is entangled if $S_\mathcal{E}(\rho_i)< 0$ and is inconclusive otherwise. The learner applies the MAB routine to this $(m, K)$ instance of quantum states under the chosen WBM $\mathcal{E}$ with the objective of accurately identifying $\mathcal{A}_{\text{ent}} = \{ i \in [K] \ \text{such that} \ S_\mathcal{E}(\rho_i) < 0\}$, using the fewest possible number of measurements. Since a single WBM may not detect all $m$ entangled states, and the value of $m$ itself is unknown, the MAB routine must be repeated for the six WBMs. Importantly, the measurement data collected under one WBM is \textit{not} used to decide the next WBM; each WBM configuration should be treated as an independent instance. We summarise the stochastic-quantum MAB correspondence concisely in Table \ref{tab:classical-quantum}. 

\begin{table}[ht]
    \centering
    \caption{Stochastic-Quantum MAB}
    \resizebox{\textwidth}{!}{ 
        \begin{tabular}{|c|c|c|}
            \toprule
            \textbf{Attributes} & \textbf{Stochastic MAB} & \textbf{Quantum MAB} \\
            \midrule
             Arms & Probability distributions $(p_1, p_2, \ldots p_K)$ & Density operators $\{\rho_1, \rho_2, \ldots, \rho_K\}$\\ 
             \hline
            Measurement & $-$ & WBM $\mathcal{E}$\\
            \hline
            Measurement Data  & $j \ \text{w.p.} \ p_i(j), \forall i \in [K]$ & $j \ \text{w.p.} \ \Tr(E_j \ \rho_i), \ \forall j \in [4], \forall i \in [K] $ \\
            \hline
            Parameters to estimate & $\boldsymbol{\mu} = (\mu_1, \mu_2, \ldots \mu_K)$ & $\boldsymbol{S}_\mathcal{E} = (S_\mathcal{E}(\rho_1), S_\mathcal{E}(\rho_2), \ldots, S_\mathcal{E}(\rho_K))$ \\
            \hline
            Objective & Identify $\mathcal{G}^C = \{i \in [K] \ \text{such that} \ \mu_i \le \zeta \}$ & Identify $\mathcal{A}_{\text{ent}} = \{ i \in [K] \ \text{such that} \ S_\mathcal{E}(\rho_i) < 0\}$ \\
            \bottomrule
        \end{tabular}
    }
    \label{tab:classical-quantum}
\end{table}
We now formalise this correspondence by defining the $(m,K)$-quantum MAB setting.
\begin{definition}
The $(m, K)$-quantum Multi-Armed Bandit (MAB) setting for entanglement detection is fully characterized by the tuple $(\mathcal{A}, \mathcal{E})$. Here, $\mathcal{A}$ denotes a finite action set with $|\mathcal{A}| = K$, consisting of $(K-m)$ two-qubit separable states and $m$ two-qubit entangled states. The term $\mathcal{E}$ corresponds to a suitable Witness Basis Measurement (WBM).
\label{def:qMAB}
\end{definition}
The objective of the $(m,K)$-quantum MAB problem for entanglement detection aligns with the classical $(m,K)$-Bad Arm identification where the goal is to find the set of ''bad'' arms $\mathcal{G}^C = \{i \in [K] \ \text{such that} \ \mu_i \le \zeta \}$ whose mean rewards fall below a threshold $\zeta$. Analogously, the $(m,K)$-quantum MAB problem seeks to identify the set of entangled states $\mathcal{A}_{\text{ent}}$ whose quadratic WBM scores $\boldsymbol{S}_{\mathcal{E}}$ violate the separability threshold. In essence, the $(m,K)$-Bad Arm identification setting and the $(m, K)$-quantum MAB problem for entanglement detection share a unified statistical structure, differing only in the interpretation of the reward model. To the best of our knowledge, this work is the first to establish a direct connection between stochastic MAB and quantum entanglement detection. This correspondence enables existing MAB algorithms to be directly applied in quantum settings, where the reward is encoded in the outcomes of WBM. 

\begin{remark}
    The $d$-dimensional discrete multi-armed quantum bandit model \citep{Lumbreras_2022} is 
    different
    % diametrically opposite to
    from our formulation. The authors consider arms to be a finite set of observables and the environment, an unknown quantum state $\rho$. The objective is to learn the unknown quantum state $\rho$ through an exploration-exploitation tradeoff. Given sequential oracle access to copies of $\rho$, each round involves selecting an observable to maximize its expectation value (reward). The information from previous rounds (history) aids in refining the action choice, thereby minimizing the regret, which is the difference between the obtained and maximal rewards. The authors also exploit the inherent linear structure in measurement outcomes and map it to the linear bandit setting. Specifically, let $\{\sigma\}_{i=1}^{d^2}$ be a set of orthogonal Hermitian matrices. The unknown environment $\rho = \sum_{i=1}^{d^2} \Tr(\rho\sigma_i)\sigma_i = \sum_{i=1}^{d^2} \theta_i\sigma_i$ and arm $\mathcal{O}_t = \sum_{i=1}^{d^2} \Tr(\mathcal{O}_t\sigma_i)\sigma_i = \sum_{i=1}^{d^2} A_{t,i} \sigma_i$. Then, $\Tr(\rho \mathcal{O}_t) = \boldsymbol\theta^\top \mathbf{A}_t$ where $\boldsymbol\theta = (\theta_1, \theta_2, \ldots \theta_{d^2})$ and $\mathbf{A}_t = (A_{t,1}, A_{t,2}, \ldots A_{t,d^2})$. In round $t$, pulling arm $\mathcal{O}_t$ provides a reward $X_t = \boldsymbol\theta^\top \mathbf{A}_t + \eta_t$, where $\eta_t$ is 1-subgaussian.
    \label{rem:tomamichel-model}
\end{remark}

\section{Parameterized Qubit States \texorpdfstring{$\mathcal{F}$}{F} }
\label{sec:param-states}

To demonstrate the applicability of stochastic MAB policies for entanglement detection, we define a class $\mathcal{F}$ of parameterized two-qubit states defined as a union of three sub-families, $\mathcal{F} = \mathcal{F}_1 \cup \mathcal{F}_2 \cup \mathcal{F}_3$. These correspond to Depolarized Bell states ($\mathcal{F}_1$), Bell Diagonal states ($\mathcal{F}_2$), and amplitude-damped Bell Diagonal states ($\mathcal{F}_3$) whose explicit parametrisations are detailed in Sec \ref{subsec:werner}, \ref{subsec:bell-diag} and \ref{subsec:ampl-damp-werner}, respectively.
%
%To demonstrate the applicability of the stochastic MAB policies for entanglement detection, we consider \textit{three} parameterized two-qubit states denoted by $\mathcal{F}$. 
%
We identify suitable WBMs from the witness family in \eqref{eq:witness-family} that can detect the same. We denote the first two witnesses in the witness family by $\mathcal{E}_1$ (base witness) and $\mathcal{E}_2$ (adapted using $(U_1, U_2) = (I, X)$), respectively. Here, $\mathcal{E}_1 \coloneqq \{\ket{00}\bra{00}, \ket{11}\bra{11}, \ket{\Psi^+}\bra{\Psi^+}, \ket{\Psi^-}\bra{\Psi^-} \}$ and $\mathcal{E}_2 \coloneqq \{\ket{01}\bra{01}, \ket{10}\bra{10}, \ket{\Phi^+}\bra{\Phi^+}, \ket{\Phi^-}\bra{\Phi^-} \}$.

\subsection{Two-qubit Depolarized Bell States}
\label{subsec:werner}
For $w \in \mathbb{R}, -1/3 \le w \le 1$, a two-qubit \textbf{Depolarized Bell} state $\rho(w)$ is given by,
\begin{equation}
    \rho(w) = w \ket{\Upsilon}\bra{\Upsilon} + (1-w)\frac{I}{4}.
    \label{eq:werner-states}
\end{equation}
Here, $\ket{\Upsilon}$ represents any one of the four Bell states $\ket{\Psi^\pm} = \left(\ket{01} \pm \ket{10}\right)/\sqrt{2}$, $\ket{\Phi^\pm} = \left(\ket{00} \pm \ket{11}\right)/\sqrt{2}$. When $\Upsilon = \ket{\Psi^-}$, \eqref{eq:werner-states} is called a Werner state, and when $\Upsilon = \ket{\Phi^+}$, \eqref{eq:werner-states} is called an Isotropic state. The Peres-Horodecki criterion guarantees that $\rho(w)$ is separable when $-1/3 \le w \le 1/3$ and is entangled when $1/3 < w \le 1$. Table~\ref{tab:werner-WBM} outlines the specific choices of WBM for the combination of the maximally mixed state with each of the four Bell states. When measured with these corresponding WBMs, the entangled depolarized Bell states are conclusively detected, determined by the value of $S = (w-1)^2/4 - w^2$ which is strictly positive for $-1 \le w \le 1/3$ and negative for $w > 1/3$.

 \begin{table*}[ht]
        \centering
        \caption{WBM for Depolarized Bell States}
            \begin{tabular}{|c|c|c|}
                \toprule
                \textbf{Depolarized State} & \textbf{Pauli Basis} & \textbf{WBM} \\
                \midrule
                 $w \ket{\Phi^+}\bra{\Phi^+} + (1-w)I/4$ & $\big[I + \alpha(XX - YY + ZZ)\big]/4$ & $\mathcal{E}_2$ \\ 
                 
                $w \ket{\Psi^+}\bra{\Psi^+} + (1-w)I/4$ & $\big[I + \alpha(XX + YY - ZZ)\big]/4$ & $\mathcal{E}_1$ \\
                
                $w \ket{\Psi^-}\bra{\Psi^-} + (1-w)I/4$ & $\big[I + \alpha( -XX - YY - ZZ)\big]/4$ & $\mathcal{E}_1$ \\
                
                $w \ket{\Phi^-}\bra{\Phi^-} + (1-w)I/4$ & $\big[I + \alpha(-XX + YY + ZZ)\big]/4$ & $\mathcal{E}_2$ \\
                \bottomrule
            \end{tabular}
        \label{tab:werner-WBM}
    \end{table*}

\subsection{Two-qubit Bell diagonal States}
\label{subsec:bell-diag}  \textbf{Bell diagonal} states are a probabilistic mixture of the four Bell states. \textcolor{black}{These states are more general than the ones in~\eqref{eq:werner-states}}. Given parameters $p_1$, $p_2$, $p_3$ and $p_4$ such that $p_i \ge 0, \sum_i p_i = 1$, the Bell diagonal state is defined,
\begin{equation}
	\rho_{\scaleto{\text{BDS}}{3pt}} = p_1 \ket{\Phi^+}\bra{\Phi^+} + p_2 \ket{\Psi^+}\bra{\Psi^+} + p_3 \ket{\Psi^-}\bra{\Psi^-} + p_4 \ket{\Phi^-}\bra{\Phi^-}.
    \label{eq:bell-diag}
\end{equation}
The eigenvalues of $\rho_{\scaleto{\text{BDS}}{3pt}}^{\top_b}$ are calculated to be $1/2 - p_1$, $1/2 - p_2$, $1/2 - p_3$ and $1/2 - p_4$. Consequently, a Bell diagonal state is entangled if any one of these probabilities exceeds $1/2$, while the sum of the other three probabilities is less than $1/2$. Conversely, a Bell diagonal state is separable if all probabilities are less than or equal to $1/2$.  Expressing \eqref{eq:bell-diag} in the Pauli basis yields,
\begin{equation*}
    \rho_{\scaleto{\text{BDS}}{3pt}} = \frac{1}{4} \left[ I + a XX + b YY + c ZZ \right],
\end{equation*}
where $a = p_1 + p_2 - p_3 - p_4$, $b = -p_1 + p_2 - p_3 + p_4$ and $c = p_1 - p_2 - p_3 + p_4$. 

\begin{table*}[ht]
        \centering
        \caption{WBM for Bell Diagonal States}
            \begin{tabular}{|c|c|c|c|c|}
                \toprule
                \textbf{Probabilistic mixture} & \textbf{a} & \textbf{b} & \textbf{c} &\textbf{WBM} \\
                \midrule
                 $p_1 > 0.5, \ p_2 + p_3 + p_4 < 0.5$ & $+$ & $-$ & $+$ & $\mathcal{E}_2$ \\ 
                 
                $p_2 > 0.5, \ p_1 + p_3 + p_4 < 0.5$ & $+$ & $+$ & $-$ & $\mathcal{E}_1$ \\
                
                $p_3 > 0.5, \ p_1 + p_2 + p_4 < 0.5$ & $-$ & $-$ & $-$ & $\mathcal{E}_1$ \\
                
                $p_4 > 0.5, \ p_1 + p_2 + p_3 < 0.5$ & $-$ & $+$ & $-$ & $\mathcal{E}_2$ \\
                \bottomrule
            \end{tabular}
        \label{tab:bell-diag-WBM}
    \end{table*}

\textcolor{black}{When $\rho_{\scaleto{\text{BDS}}{3pt}}$ is entangled, the index for which $p_i > 1/2$ determines the sign of $a,b,$ and $c$, see Table~\ref{tab:bell-diag-WBM}.} It is notable that the signs of $a, b$ and $c$ follow a similar pattern to the Pauli basis expansion of various Depolarized Bell states listed in Table \ref{tab:werner-WBM}. We observe that, for suitable combinations of $a, b$, and $c \in \{+1, -1\}$, the Bell diagonal state reduces to one of the Depolarized Bell states, and states can be detected using the same WBMs, as in Table \ref{tab:werner-WBM}. Specifically, the value of $S$ under the two WBMs in Table \ref{tab:bell-diag-WBM} is equal to $(1 - p_1 - p_4)^2 - 4(p_1 - p_4)^2$ and $(1 - p_2 - p_3)^2 - 4(p_2 - p_3)^2$, respectively. Depending on the probabilistic mixture, one of the two WBMs will conclusively result in $S < 0$.

\subsection{Two-qubit Amplitude Damping on Depolarized Bell States}
\label{subsec:ampl-damp-werner}
A qubit amplitude damping channel is a source of noise in superconducting circuit-based quantum computing and thus serves as a realistic channel model for simulating lossy processes in these systems. Mathematically, it can be obtained from an isometry $J$,
\begin{equation}
    J : \mathcal{H}_a \mapsto \mathcal{H}_b \otimes \mathcal{H}_c; \ \ J^\dagger J = I_a
    \label{eq:J-isometry}
\end{equation}
where $\mathcal{H}_a$ denotes the Hilbert space for the channel's input, and $\mathcal{H}_b$ and $\mathcal{H}_c$ represent the Hilbert spaces for the direct and complementary channel outputs, respectively. An isometry of the form,
\begin{align}
    J_1\ket{0}_a &= \ket{0}_b \ket{1}_c, \nonumber \\
    J_1\ket{1}_a &= \sqrt{1-r}\ket{1}_b \ket{1}_c + \sqrt{r}\ket{0}_b \ket{0}_c,
    \label{eq:J-ampl-damp}
\end{align}
where $0 \leq r \leq 1$ defines a pair 
of channels, $\mathcal{B}(A) = \Tr_c(JAJ^{\dagger})$ and $\mathcal{C}(A) = \Tr_b(JAJ^{\dagger})$.
Here, $\mathcal{B}$ is an amplitude damping channel with damping probability $r$ for the state $\ket{1}_a$ to decay to output state $\ket{0}_b$. The isometry $J_1 = K_0 \otimes \ket{0} + K_1 \otimes \ket{1}$ where $K_0$ and $K_1$ (Kraus) damping operators such that $K_0 = [0, \sqrt{r}; 0, 0]$ and $K_1 = [1, 0; 0, \sqrt{1-r}]$. For a single qubit represented by state $\rho$, the amplitude-damped output is given by,
\begin{equation}
    \mathcal{B}(\rho) = K_0 \rho K_0^\dagger + K_1 \rho K_1^\dagger.
    \label{eq:ampl-damp-output}
\end{equation}
We can extend \eqref{eq:ampl-damp-output} for two-qubit states with damping probabilities $r$ and $q$ for the first and second qubit, respectively. Assuming that $r = q$, we consider Depolarized Bell states \eqref{eq:werner-states} with amplitude damping. %We unveil some intriguing properties that are inherent to such states.

\begin{proposition}
    For any damping probability $r > 0$, a Depolarized Bell state with amplitude damping can not be expressed as a Bell diagonal state \eqref{eq:bell-diag}.
\end{proposition}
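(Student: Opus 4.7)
The plan is to work in the two-qubit Pauli basis and exhibit a component of $(\mathcal{B}\otimes\mathcal{B})(\rho(p))$ that cannot appear in any Bell diagonal state. Recall from Section~\ref{subsec:bell-diag} that every Bell diagonal state has the form $\frac{1}{4}[\mathbf{1}\mathbf{1} + aXX + bYY + cZZ]$; in particular, its Pauli expansion contains \emph{no} single-qubit terms of the form $\mathbf{1}Z$ or $Z\mathbf{1}$. So it suffices to show that for every $r>0$ the amplitude-damped Depolarized Bell state has a nonzero $\mathbf{1}Z$ (equivalently $Z\mathbf{1}$) coefficient, and invoke the orthogonality of the Pauli basis under the Hilbert--Schmidt inner product to conclude the two decompositions cannot match.

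First I would compute the single-qubit action of $\mathcal{B}$ on $\{\mathbf{1},X,Y,Z\}$ directly from the Kraus operators $K_0,K_1$ in \eqref{eq:ampl-damp-output}. A short calculation yields
\begin{equation*}
    \mathcal{B}(\mathbf{1}) = \mathbf{1} + rZ,\quad \mathcal{B}(X) = \sqrt{1-r}\,X,\quad \mathcal{B}(Y) = \sqrt{1-r}\,Y,\quad \mathcal{B}(Z) = (1-r)Z,
\end{equation*}
where the key non-unital feature $\mathcal{B}(\mathbf{1})\neq\mathbf{1}$ is exactly what will force the obstruction. Tensoring gives $(\mathcal{B}\otimes\mathcal{B})(\mathbf{1}\mathbf{1}) = \mathbf{1}\mathbf{1} + r\mathbf{1}Z + rZ\mathbf{1} + r^2 ZZ$, together with the diagonal scalings $XX\mapsto(1-r)XX$, $YY\mapsto(1-r)YY$, $ZZ\mapsto(1-r)^2ZZ$ for the remaining correlation operators.

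Next I would expand $\rho(p) = p|\Upsilon\rangle\langle\Upsilon| + (1-p)\mathbf{1}/4$ in the Pauli basis for an arbitrary Bell state $\ket{\Upsilon}$. Since every Bell state has maximally mixed marginals, $|\Upsilon\rangle\langle\Upsilon| = \frac{1}{4}(\mathbf{1}\mathbf{1} \pm XX \pm YY \pm ZZ)$ with signs determined by $\Upsilon$ and, crucially, \emph{no} single-qubit Pauli terms. Hence, after applying $\mathcal{B}\otimes\mathcal{B}$, the $\mathbf{1}Z$ coefficient in $(\mathcal{B}\otimes\mathcal{B})(|\Upsilon\rangle\langle\Upsilon|)$ comes solely from the $\frac{1}{4}\mathbf{1}\mathbf{1}$ summand and equals $r/4$; the same coefficient appears in $(\mathcal{B}\otimes\mathcal{B})(\mathbf{1}/4)$. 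Combining the two contributions, the $\mathbf{1}Z$ coefficient of $(\mathcal{B}\otimes\mathcal{B})(\rho(p))$ is
\begin{equation*}
    p\cdot\frac{r}{4} + (1-p)\cdot\frac{r}{4} = \frac{r}{4},
\end{equation*}
and an identical calculation gives $r/4$ for $Z\mathbf{1}$. For $r>0$ these are strictly positive, independent of $p$ and of the choice of Bell state.

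The conclusion follows by uniqueness of the Pauli decomposition: a Bell diagonal state must have vanishing $\mathbf{1}Z$ and $Z\mathbf{1}$ coefficients, but $(\mathcal{B}\otimes\mathcal{B})(\rho(p))$ has both equal to $r/4\neq 0$, so no assignment $(p_1,p_2,p_3,p_4)$ in \eqref{eq:bell-diag} can reproduce it. I do not anticipate a serious technical obstacle; the main thing to be careful about is verifying that the two contributions to the $\mathbf{1}Z$ coefficient add constructively (which they do, with the same sign and the same magnitude $r/4$) rather than cancelling for some value of $p$ or some specific Bell state $\ket{\Upsilon}$. Once this is checked, the result is immediate for every $\Upsilon \in \{\Phi^\pm,\Psi^\pm\}$ and every admissible $p$.
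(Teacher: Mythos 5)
Your proof is correct, and it takes a genuinely different route from the paper's. The paper argues via the computational-basis diagonal: every Bell diagonal state \eqref{eq:bell-diag} has equal matrix elements on $\ket{00}\bra{00}$ and $\ket{11}\bra{11}$, and the paper computes these two entries for the amplitude-damped isotropic state explicitly, showing they coincide only at $r=0$. You instead exploit the non-unitality of the amplitude damping channel in the Pauli picture: since $\mathcal{B}(\mathbf{1})=\mathbf{1}+rZ$ while $X,Y,Z$ are merely rescaled, and since every Bell state (and hence every depolarized Bell state \eqref{eq:werner-states}) has no single-qubit Pauli terms, the damped state acquires $\mathbf{1}Z$ and $Z\mathbf{1}$ components with coefficient $r/4$, independent of $p$ and of which Bell state $\ket{\Upsilon}$ is used; a Bell diagonal state has none, so uniqueness of the Pauli decomposition finishes the argument. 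Equivalently, your obstruction is that the damped state has marginals $(\mathbf{1}+rZ)/2 \neq \mathbf{1}/2$, whereas Bell diagonal states have maximally mixed marginals. Your route is arguably cleaner and more uniform: it treats all four Bell states and all admissible $p$ in one stroke and isolates the structural reason (non-unitality of $\mathcal{B}$ versus vanishing local terms), whereas the paper's displayed calculation is carried out for the isotropic case and rests on comparing two specific closed-form diagonal entries. Both establish the proposition; nothing in your argument has a gap.
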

This fact can be readily demonstrated through a straightforward calculation. Consider the Isotropic state $\rho(w) = w \ket{\Phi^+}\bra{\Phi^+} + (1-w)\frac{I}{4}$, which can be represented by the Bell diagonal state formed with probability distribution 
$(p_1, p_2, p_3, p_4) = \left((3w+1)/4, (1-w)/4, (1-w)/4, (1-w)/4 \right)$.
In a Bell diagonal state, the diagonal elements corresponding to $\ket{00}\bra{00}$ and $\ket{11}\bra{11}$ are identical. In the case of an amplitude-damped Isotropic state, we observe that,
\begin{equation*}
    p_2 = p_3 = \frac{1-r}{4} \left( w -wr - r - 1\right).
\end{equation*}
However, obtaining closed-form expressions for $p_1$ and $p_4$ when $r>0$ is cumbersome. Specifically, the values on the diagonal corresponding to $\ket{00}\bra{00}$ and $\ket{11}\bra{11}$ is given by $w(r^2 +1)/2  - (w-1)4(r+1)^2/4$ and  $w(r - 1)^2/2  - (w-1)(r-1)^2/4$, respectively. These expressions are equal only when $r=0$. 

\begin{proposition}
    For every $w \in [\frac{1}{3},1]$, there exists $\Tilde{r} \subset [0,1]$ such that an amplitude-damped Depolarised Bell state becomes separable.
\end{proposition}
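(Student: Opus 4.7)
The plan is to prove existence by a continuity argument combined with the Peres-Horodecki criterion, which is necessary and sufficient for separability in the two-qubit setting. Write $\tilde{\rho}(p,r) \coloneqq (\mathcal{B}\otimes\mathcal{B})(\rho(p))$ for the amplitude-damped depolarized Bell state, using the Kraus representation in \eqref{eq:ampl-damp-output} applied on each tensor factor. The entries of $\tilde{\rho}(p,r)$, and hence of its partial transpose $\tilde{\rho}(p,r)^{\top_2}$, are polynomials in $r$ for each fixed $p$, so the map $r \mapsto \lambda_{\min}(r) \coloneqq \min\operatorname{spec}\bigl(\tilde{\rho}(p,r)^{\top_2}\bigr)$ is continuous on $[0,1]$.

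The key observation is the boundary value at $r = 1$. There, the Kraus operators reduce to $K_0 = \ket{0}\bra{1}$ and $K_1 = \ket{0}\bra{0}$, so for any single-qubit density operator $\sigma$ one has $\mathcal{B}(\sigma) = \ket{0}\bra{0}$. Therefore $\tilde{\rho}(p,1) = \ket{00}\bra{00}$ regardless of the value of $p$, which is a pure product state and manifestly separable. Its partial transpose coincides with itself and has spectrum $\{1,0,0,0\}$, so $\lambda_{\min}(1) = 0$.

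By continuity of $\lambda_{\min}(\cdot)$ at $r = 1$, for each $p \in [1/3,1]$ there exists $r_0(p) \in [0,1)$ such that $\lambda_{\min}(r) \ge 0$ for all $r \in [r_0(p),1]$. Invoking Peres-Horodecki in the $2 \times 2$ setting (Section~\ref{subsec:sep-criteria}), $\tilde{\rho}(p,r)$ is separable for every $r$ in the nonempty interval $\tilde{r} \coloneqq [r_0(p),1] \subset [0,1]$, which establishes the claim. To make the threshold explicit, one would take $r_0(p)$ to be the smallest root in $[0,1]$ of $\lambda_{\min}(r) = 0$.

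The principal obstacle, which I would deliberately sidestep, is finding $r_0(p)$ in closed form: the partial transpose is $4\times 4$ and its characteristic polynomial is quartic in $r$ with coefficients depending on $p$, and as the excerpt already noted for the related Bell-diagonal diagonal entries, these closed-form expressions are cumbersome. For the existence statement, however, the only facts needed are (i) the exact product-state value at $r=1$ and (ii) continuity of the spectrum in $r$; both are immediate, and together they force separability on a right-neighborhood of $r=1$ for every admissible $p$.
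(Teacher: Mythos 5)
Your boundary observation at $r=1$ is correct and, on its own, already proves the statement as literally written: at $r=1$ the Kraus operators collapse every single-qubit state to $\ket{0}\bra{0}$, so the damped state is the product state $\ket{00}\bra{00}$ and one may take $\tilde{r}=\{1\}$. The genuine gap is the continuity step you use to upgrade this to an interval. From $\lambda_{\min}(1)=0$ and continuity of $\lambda_{\min}$ you cannot conclude that $\lambda_{\min}(r)\ge 0$ for all $r$ in some $[r_0(p),1]$ with $r_0(p)<1$: continuity only forces $\lambda_{\min}$ to be close to zero near $r=1$, and it may approach zero from below. This is not a hypothetical worry; it happens at the endpoint $p=1$ of the stated range. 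For the two-sided amplitude-damped $\ket{\Phi^+}$ state the damped density matrix is diagonal except for a single coherence between $\ket{00}$ and $\ket{11}$, and its partial transpose has eigenvalues $\frac{1+r^2}{2}$, $\frac{(1-r)^2}{2}$, $\frac{(1-r)(1+r)}{2}$ and $-\frac{(1-r)^2}{2}$, so the state is NPT (hence entangled, by Peres--Horodecki) for every $r<1$ and is PPT only at the isolated point $r=1$. Thus the interval $[r_0(p),1]$ with $r_0(p)<1$ that you assert does not exist at $p=1$; your inference would be sound only if the limiting state at $r=1$ lay in the interior of the separable set ($\lambda_{\min}(1)>0$), whereas $\ket{00}\bra{00}$ sits on its boundary.

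The paper's proof performs exactly the computation you chose to sidestep, and it is not the intractable quartic you feared: because the partial transpose is diagonal plus a single $2\times 2$ coherence block, all four eigenvalues are available in closed form (Table~\ref{tab:eigval-ampl-damp-states}); three are always non-negative, and the sign of the remaining one as a function of $(p,r)$ delineates the separable region shown in Figs.~\ref{fig:eigenvalue-phi} and \ref{fig:eigenvalue-psi} --- a nontrivial range of $r$ for $\frac{1}{3}\le p<1$, degenerating to the single point $r=1$ at $p=1$. To repair your argument you should either retreat to the trivial conclusion $\tilde{r}=\{1\}$, or replace the appeal to continuity at a boundary point by an explicit sign analysis of the one potentially negative eigenvalue.
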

The PPT criterion asserts that a two-qubit state is entangled if and only if its partial transpose contains at least one negative eigenvalue. For Bell states that are both amplitude damped and depolarized, we evaluate the eigenvalues and observe that one of them can exhibit either positive or negative values contingent upon the range of $r$. Detailed findings are presented in Table \ref{tab:eigval-ampl-damp-states} and depicted graphically in Fig.~\ref{fig:eigenvalue-phi} and Fig.~\ref{fig:eigenvalue-psi}. Furthermore, the WBM for amplitude damped and Depolarized Bell states aligns with that of depolarized Bell states, as outlined in Table \ref{tab:werner-WBM}.

\begin{figure}[hbp]
  \centering
  \begin{subfigure}[b]{0.48\textwidth}
    \includegraphics[width=\linewidth]{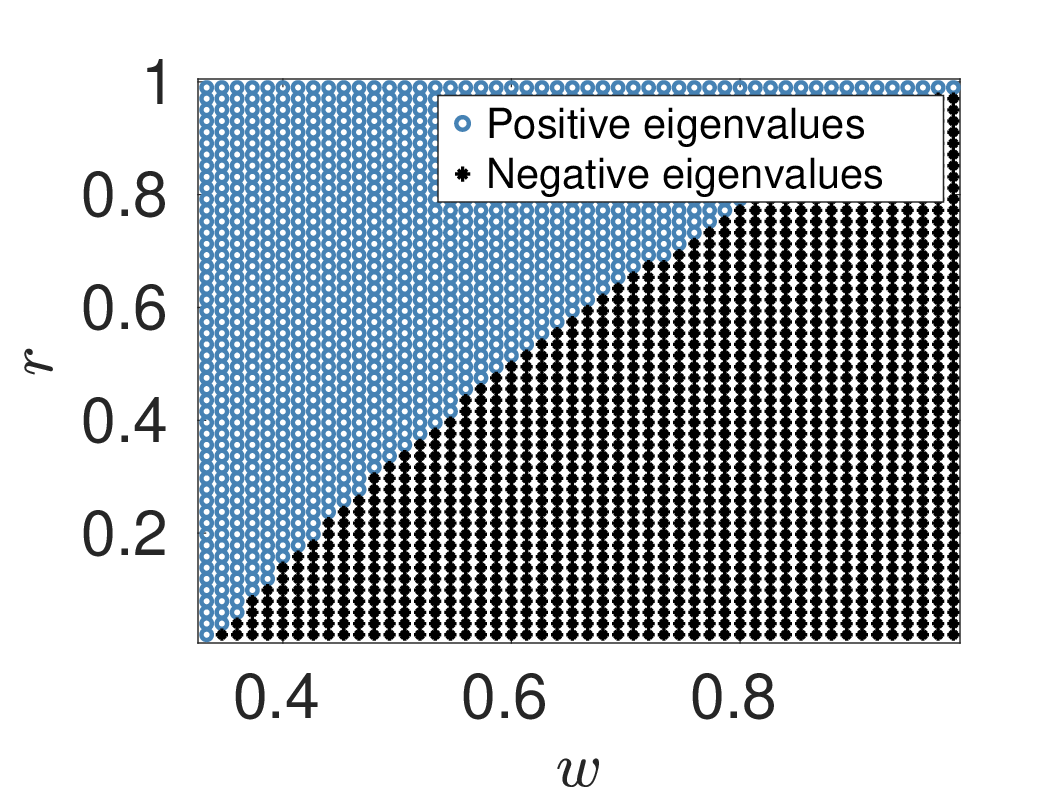}
    \caption{Range of $r$ for eigenvalue corresponding to $\ket{\Phi^\pm}$}
    \label{fig:eigenvalue-phi}
  \end{subfigure}
  \hfill
  \begin{subfigure}[b]{0.48\textwidth}
    \includegraphics[width=\linewidth]{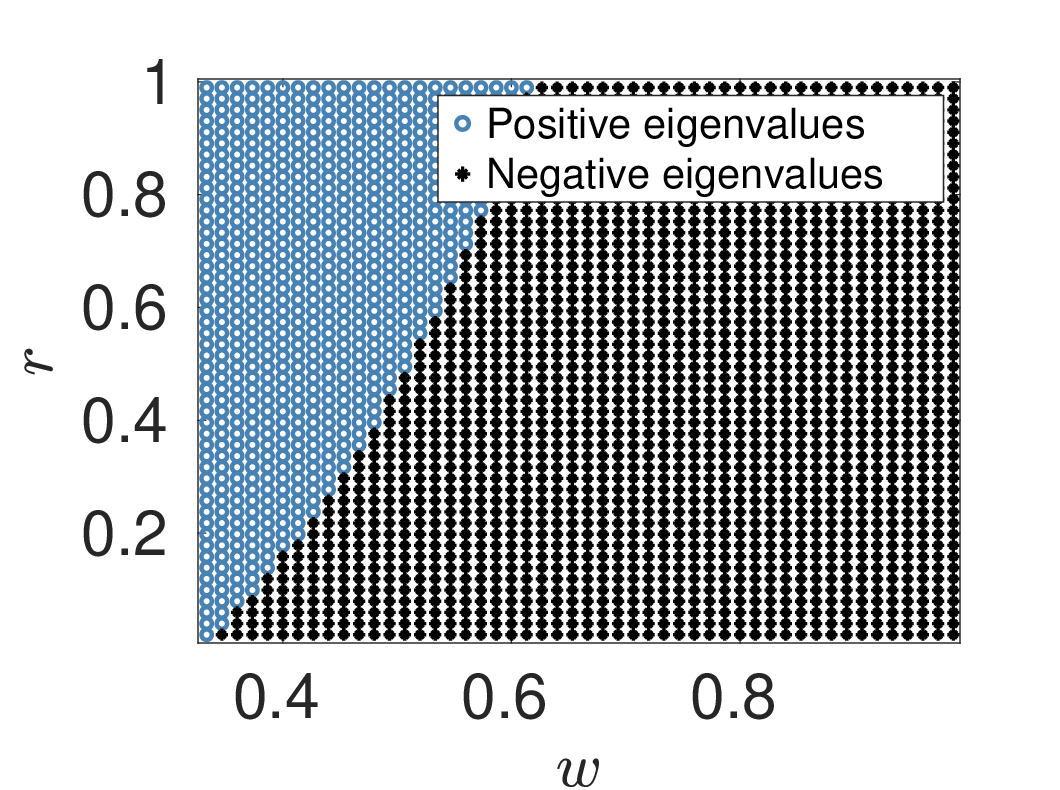}
    \caption{Range of $r$ for eigenvalue corresponding to $\ket{\Psi^\pm}$}
    \label{fig:eigenvalue-psi}
  \end{subfigure}
  \caption{A phase diagram representing the region of damping and depolarizing parameters, $r$ and $w$, respectively, where the damped–depolarized Bell state has negative or positive partial transpose.}
  \label{fig:phase-diagram}
\end{figure}

\begin{table}[htbp]
    \centering
    \caption{The four eigenvalues of amplitude-damped Depolarized Bell states}
        \begin{tabular}{|c|c|c|}
            \toprule
            State with $\ket{\Phi}^\pm$ & State with $\ket{\Psi}^\pm$ & \textbf{Sign of eigenvalue} \\
            \midrule   
            $\frac{(w+1)(1 - r^2)}{4}$ & $\frac{(1-r)(1 + r + w -wr)}{4}$ & Always positive  \\
            $\frac{(w+1)(1 - r)^2}{4}$ & $\frac{(1-r)(1 + r + w -wr)}{4}$ & Always positive  \\   
            $\frac{w(r-1)^2 + (r+1)^2}{4}$ & $\frac{r^2 + 1 - w(1-r)^2 + 2\sqrt{w^2(1-r)^2 + r^2}}{4}$ & Always positive   \\
            $ \frac{-r^2(w-1) + wr + (1-3w)}{4}$ & $\frac{r^2 + 1 - w(1-r)^2 - 2\sqrt{w^2(1-r)^2 + r^2}}{4}$ &  Positive and Negative  \\
            \bottomrule
        \end{tabular}
    \label{tab:eigval-ampl-damp-states}
\end{table}

\iffalse
\begin{figure*}[ht]
  \centering
	\subfloat[Range of $r$ for eigenvalue corresponding to $\ket{\Phi^\pm}$\label{fig:eigenvalue-phi}]{\includegraphics[width=0.5\linewidth]{eigvals2}}
	\hfill
	\subfloat[Range of $r$ for eigenvalue corresponding to $\ket{\Psi^\pm}$\label{fig:eigenvalue-psi}]{%
    \includegraphics[width=0.5\linewidth]{eigvals1}}
  \caption{A phase diagram representing the region of damping and depolarizing parameters, $r$ and $w$, respectively, where the damped-depolarized Bell state has negative or positive partial transpose.}
	\label{ } 
\end{figure*}
\fi

%%%%%%%%%%%%%%%%%%%%%%%%%%%%%%%%%%%%%%%%%%%%%%%
\section{Stochastic MAB policies for Entanglement Detection}
\label{sec:solution}

We apply stochastic MAB algorithms for entanglement detection in the parameterized states $\mathcal{F}$ from Section \ref{sec:class-quant}. The terminology aligns with classical counterparts, as outlined in Table \ref{tab:classical-quantum}. Consider a set of $K$ unknown states, denoted by $\mathcal{A} = \{ \rho_1, \rho_2, \ldots, \rho_K \} \in \mathcal{F}$. To perform measurements on the arms, the learner must know the underlying WBM. Thus, we assume knowledge of the specific forms of the arms in $\mathcal{A}$. For instance, $\mathcal{A}$ could represent the set of isotropic states detectable under the WBM $\mathcal{E}_2$, where each state is of the form $\rho_i = w_i \ket{\Phi^+}\bra{\Phi^+} + (1-w_i)\frac{I}{4}$, with $w_i$ being unknown for all $i \in [K]$. With this assumption, we describe the MAB routine as follows: In each round $t \in \mathbb{N}$, the learner selects a state $\rho_{a_t} \in \mathcal{A}$, performs a measurement $\mathcal{E}$ and obtains i.i.d. outcomes, and then updates the values $\boldsymbol{\hat{S}}_\mathcal{E}$ to identify the entangled arm(s) or continues. In the subsequent sections, we discuss two MAB policies: \textit{Successive Elimination}, which is applicable when there is a guarantee of one entangled arm among $K$ arms, and the \textit{lil'HDoC} policy, designed for scenarios where there are $m$ entangled arms among $K$, with $m$ being unknown. For proving $\delta$-correctness of policies, we use concentration inequalities applicable to $\sigma$-subgaussian\footnote{A $\sigma$-subgaussian random variable is a real, centered random variable $X$ that satisfies $\mathbb{E}[e^{sX}] \le e^{\sigma^2s^2/2}$ for any $s \in \mathbb{R}$.} random variables—specifically, the law of iterated logarithm \citep[Lemma 3]{jamieson2014LILUCB} for a finite sum of $\sigma$-subgaussian random variables:
\begin{lemma}
    Let $X_1, X_2, \ldots X_t$ be {\em i.i.d.} sub-gaussian random variables with scale parameter $\sigma$. For any $\varepsilon  \in (0,1)$, $\delta \in \left(0, \frac{\log(1 + \varepsilon )}{e}\right)$, one has with probability at least $1-c_\varepsilon\delta^{(1+\varepsilon)}$ for all $t \ge 1$,
    \begin{equation}
        \frac{1}{t}\sum_{s=1}^{t} X_s \le U(t, \delta),
    \end{equation}
    where $U(t, \delta) = (1+ \sqrt{\varepsilon }) \sqrt{\frac{2\sigma^2(1+\varepsilon )}{t} \log\left( \frac{\log\left((1+\varepsilon )t\right)}{\delta}\right)}$ is the confidence width and $c_\varepsilon = \frac{2+\varepsilon }{\varepsilon }\left(\frac{1}{\log(1+\varepsilon )} \right)^{1+\varepsilon}$.
    \label{lemma:FLIL}
\end{lemma}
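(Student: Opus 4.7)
The plan is a textbook \emph{peeling} (geometric time-slicing) argument combined with a sub-Gaussian maximal inequality; this is the strategy introduced in the analysis of lil'UCB. The idea is to convert the uniform-in-$t$ statement into a countable union of per-slice tail events, each controllable by a Chernoff-type bound, and to choose the slice widths and per-slice confidence parameters so that the resulting series sums to a $\delta^{1+\varepsilon}$ tail.

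First, I would establish the maximal inequality for the partial sums $S_t := \sum_{s=1}^{t} X_s$. Since each $X_s$ is $1$-sub-Gaussian and zero-mean, the process $M_t^{\lambda} := \exp(\lambda S_t - \lambda^2 t/2)$ is a non-negative supermartingale with $M_0^{\lambda} = 1$. Applying Doob's maximal inequality to $M_t^{\lambda}$ and optimizing over $\lambda > 0$ yields
\[
\mathbb{P}\!\left(\max_{t\le n} S_t \ge b\right) \le \exp\!\left(-\frac{b^2}{2n}\right), \qquad b>0,\ n\ge 1.
\]

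Next, I would partition $\{t\ge 1\}$ into the geometric blocks $T_k := [(1+\varepsilon)^k,\,(1+\varepsilon)^{k+1})$, $k\ge 0$. Because $t \mapsto t\,U(t,\delta)$ grows essentially like $\sqrt{t\log\log t}$, its minimum on $T_k$ is attained at the left endpoint $(1+\varepsilon)^k$, so on $T_k$ the event $\{S_t/t > U(t,\delta)\}$ is contained in $\bigl\{\max_{t\le(1+\varepsilon)^{k+1}} S_t \ge b_k\bigr\}$ with
\[
b_k \;=\; (1+\sqrt{\varepsilon})\sqrt{2(1+\varepsilon)^{k+1}\,\log\!\left(\tfrac{(k+1)\log(1+\varepsilon)}{\delta}\right)}.
\]
Substituting this $b_k$ and $n=(1+\varepsilon)^{k+1}$ into the maximal inequality collapses the exponent to $(1+\sqrt{\varepsilon})^2\log\!\bigl((k+1)\log(1+\varepsilon)/\delta\bigr)$, giving a per-block failure probability at most $\bigl(\delta/((k+1)\log(1+\varepsilon))\bigr)^{(1+\sqrt{\varepsilon})^2}$. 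The assumption $\delta < \log(1+\varepsilon)/e$ ensures the argument of the inner logarithm is $\ge e$ already at $k=0$, keeping all quantities well-defined and positive.

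Finally, a union bound over $k$ produces a total failure probability at most
\[
\left(\frac{1}{\log(1+\varepsilon)}\right)^{\!(1+\sqrt{\varepsilon})^2}\!\delta^{(1+\sqrt{\varepsilon})^2}\sum_{k=1}^{\infty}\frac{1}{k^{(1+\sqrt{\varepsilon})^2}}.
\]
Since $(1+\sqrt{\varepsilon})^2 = 1+2\sqrt{\varepsilon}+\varepsilon > 1$, the series converges and an integral comparison bounds it by $(2+\varepsilon)/\varepsilon$; then, using $\delta<1$ together with $(1+\sqrt{\varepsilon})^2 \ge 1+\varepsilon$ to weaken the exponent on $\delta$ from $(1+\sqrt{\varepsilon})^2$ down to $1+\varepsilon$, one arrives at the stated bound $c_\varepsilon\,\delta^{1+\varepsilon}$. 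The principal obstacle is this last bookkeeping step: one must verify the monotonicity (or, failing that, a left-endpoint lower bound) of $t\,U(t,\delta)$ on each $T_k$, juggle the two exponents $(1+\sqrt{\varepsilon})^2$ and $1+\varepsilon$ consistently, and bound the zeta-like tail $\sum_k k^{-(1+\sqrt{\varepsilon})^2}$ by $(2+\varepsilon)/\varepsilon$ so that the constants line up with the stated $c_\varepsilon$.
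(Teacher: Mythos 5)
The paper does not actually prove this lemma: its ``proof'' is a one-line citation to \cite[Lemma 1]{jamieson2014LILUCB}, and your peeling argument is essentially a reconstruction of the proof given in that cited reference, so in substance you are following the same route the paper relies on. Your steps check out: the Chernoff--Doob maximal inequality for $1$-sub-Gaussian (zero-mean) sums, the geometric blocks $T_k=[(1+\varepsilon)^k,(1+\varepsilon)^{k+1})$, the monotonicity of $t\,U(t,\delta)\propto\sqrt{t\log\bigl(\log((1+\varepsilon)t)/\delta\bigr)}$ (valid because $\delta<\log(1+\varepsilon)/e$ keeps the inner logarithm positive), the per-block bound $\bigl(\delta/((k+1)\log(1+\varepsilon))\bigr)^{(1+\sqrt{\varepsilon})^2}$, and the zeta-tail bound $\sum_{j\ge 1} j^{-(1+\sqrt{\varepsilon})^2}\le 1+\tfrac{1}{2\sqrt{\varepsilon}+\varepsilon}\le\tfrac{2+\varepsilon}{\varepsilon}$ are all correct.

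The only place where your bookkeeping, as written, does not deliver the stated constant is the final exponent reduction: you propose ``using $\delta<1$ to weaken the exponent on $\delta$'' from $(1+\sqrt{\varepsilon})^2$ to $1+\varepsilon$ while implicitly leaving the factor $\bigl(1/\log(1+\varepsilon)\bigr)^{(1+\sqrt{\varepsilon})^2}$ untouched. Since $1/\log(1+\varepsilon)>1$ for $\varepsilon\in(0,1)$, that factor is \emph{larger} than $\bigl(1/\log(1+\varepsilon)\bigr)^{1+\varepsilon}$, so this route overshoots $c_\varepsilon=\tfrac{2+\varepsilon}{\varepsilon}\bigl(1/\log(1+\varepsilon)\bigr)^{1+\varepsilon}$. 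The fix is immediate and is what the hypothesis on $\delta$ is for: the combined ratio satisfies $\delta/\log(1+\varepsilon)<1/e<1$, so you may lower the exponent on the whole ratio, $\bigl(\delta/\log(1+\varepsilon)\bigr)^{(1+\sqrt{\varepsilon})^2}\le\bigl(\delta/\log(1+\varepsilon)\bigr)^{1+\varepsilon}$, which together with the $\tfrac{2+\varepsilon}{\varepsilon}$ tail bound gives exactly $c_\varepsilon\delta^{1+\varepsilon}$. With that one-line correction your proof is complete and matches the cited result.
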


\subsection{Successive Elimination Algorithm}
\label{SUBSEC:MOD-SUCCELIM}

Consider the set of states $\mathcal{A} = \{ \rho_1, \rho_2, \dots, \rho_K \}$ detectable under WBM $\mathcal{E}$, with the guarantee that exactly one arm in the set is entangled. Given a WBM $\mathcal{E}$, for each state $\rho_i$, we use the notation $S_i$ and $S_{\mathcal{E}}(\rho_i)$ interchangeably, and denote by $\hat{S}_{i, N_i(t)}$ the estimate formed from $N_i(t)$ i.i.d. samples. The underlying problem instance $\boldsymbol{S}_{\mathcal{E}}$ satisfies $S_\mathcal{E}(\rho_1) \ge S_\mathcal{E}(\rho_2) \ge \dots > S_\mathcal{E}(\rho_{K-1}) > 0 > S_\mathcal{E}(\rho_K)$. To identify the entangled arm, we use the Successive Elimination algorithm \citep{evendar2002SUCCELIM} with a modified elimination rule, as outlined in Algorithm \ref{alg:mod-SEA}. The algorithm takes as input the set $\mathcal{A}$, threshold $\zeta = 0$, WBM $\mathcal{E}$ and the error probability $\delta$, and it outputs the entangled state $i^\star = \arg\min_{i \in [K]} S_\mathcal{E}(\rho_i)$. The algorithm maintains an active set $\Omega$ and measures every state in it. In order to identify $i^\star$, the policy eliminates states whose Lower Confidence Bound (LCB) exceeds the threshold and halts when only one state remains in the active set. 

\begin{algorithm}[htbp]
    \caption{\textsc{Successive Elimination Algorithm}}
    \label{alg:mod-SEA}
    \begin{algorithmic}
        \REQUIRE $\zeta = 0$, $\delta$, $\mathcal{A}$, WBM $\mathcal{E}$
        \ENSURE $\Omega$
        \STATE Initialize active set $\Omega \gets \mathcal{A}$
        \STATE Set: $t \gets 0, N_i(t) \gets 0, \hat{S}_{i, N_i(t)} \gets 0, \ \forall i \in \Omega$
        \FOR{$t=1,2,3,\ldots$}
            \FOR{$\rho_i \in \Omega$}
                \STATE Perform measurement $\mathcal{E}$ on $\rho_i$
                \STATE Update $N_i(t),\hat{S}_{i, N_i(t)}$ based on measurement outcomes 
                \STATE Update confidence width $U\left(N_i(t),\frac{\delta}{c_\varepsilon K}\right)$ (see Lemma \ref{lemma:FLIL})
                \STATE Compute lower confidence bound: $\text{LCB}_i(t) \gets \hat{S}_{i, N_i(t)} - U\left(N_i(t), \frac{\delta}{c_\varepsilon K}\right)$
            \ENDFOR
            \IF{$\text{LCB}_i(t) > 0 \text{ for } i \in \Omega$}
                \STATE Update active set: $\Omega \gets \Omega - \{i\}$
            \ENDIF
            \IF{$|\Omega| = 1$}
                \STATE Return $\Omega$
            \ENDIF
        \ENDFOR
    \end{algorithmic}
\end{algorithm}

\begin{lemma}
    Algorithm \ref{alg:mod-SEA} is $\delta$-correct.
    \label{lemma:delta-PAC-SUCCELIM}
\end{lemma}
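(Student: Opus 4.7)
The plan is to prove correctness via a standard ``good event'' analysis. Define
\begin{equation*}
\mathcal{G} := \Big\{\, \big|\hat{S}_{i, N_i(t)} - S_\mathcal{E}(\rho_i)\big| \le U\big(N_i(t), \delta/(c_\varepsilon K)\big) \ \ \forall\, i \in [K],\ \forall\, t \ge 1\, \Big\}.
\end{equation*}
The argument then splits into a probabilistic step, bounding $\mathbb{P}(\mathcal{G}^c) \le \delta$, and a deterministic step showing that on $\mathcal{G}$ the algorithm returns $\{i^\star\}$ in finitely many rounds.

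For the probabilistic step, I would apply Lemma \ref{lemma:FLIL} in its two-sided form (invoking it on both the centered observations and their negations, which is valid by symmetry of the sub-Gaussian tail). Substituting $\delta \mapsto \delta/(c_\varepsilon K)$, the per-arm failure probability is at most $2 c_\varepsilon (\delta/(c_\varepsilon K))^{1+\varepsilon}$, and a union bound over the $K$ arms gives $\mathbb{P}(\mathcal{G}^c) \le 2\delta \cdot (\delta/(c_\varepsilon K))^\varepsilon \le \delta$ for $\delta$ in the permitted range of Lemma \ref{lemma:FLIL}. The multiplicative factor $c_\varepsilon K$ in the confidence-width argument is precisely what makes this union bound close cleanly.

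For the deterministic step, assume $\mathcal{G}$ holds. Since $S_\mathcal{E}(\rho_{i^\star}) < 0$, the upper-deviation side of $\mathcal{G}$ yields $\mathrm{LCB}_{i^\star}(t) = \hat{S}_{i^\star, N_{i^\star}(t)} - U \le S_\mathcal{E}(\rho_{i^\star}) + U - U = S_\mathcal{E}(\rho_{i^\star}) < 0$ for every $t$, so $i^\star$ is never eliminated. For any suboptimal arm $i \ne i^\star$, one has $S_\mathcal{E}(\rho_i) > 0$, and as long as $i \in \Omega$ the arm is sampled once per round, so $N_i(t) \to \infty$ and $U(N_i(t), \cdot) \to 0$; the first round in which $U < S_\mathcal{E}(\rho_i)/2$ gives $\mathrm{LCB}_i(t) \ge S_\mathcal{E}(\rho_i) - 2U > 0$, forcing elimination of $i$. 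Iterating, all suboptimal arms are removed in finitely many rounds, $|\Omega|=1$, and the output equals $\{i^\star\}$, establishing both conditions of the $\delta$-PC definition. The a.s.\ finite-termination claim extends to $\mathcal{G}^c$ because the confidence widths shrink deterministically and $\hat{S}_i \to S_\mathcal{E}(\rho_i)$ almost surely by the strong law, eventually eliminating each $i \ne i^\star$ regardless.

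The main obstacle is statistical rather than algorithmic: $\hat{S}_{i, N_i(t)} = 4\hat{f}_1 \hat{f}_2 - (\hat{f}_3 - \hat{f}_4)^2$ is a quadratic functional of the empirical frequencies, not a sample mean of i.i.d.\ $1$-sub-Gaussian observations to which Lemma \ref{lemma:FLIL} applies directly. I would handle this either by constructing a per-round unbiased $[-1,1]$-valued estimator of $S_\mathcal{E}(\rho_i)$ (for instance by combining paired measurement outcomes so that the product $f_1 f_2$ and the square $(f_3-f_4)^2$ each become averages of bounded Bernoulli-type variables), or by propagating concentration of the four empirical frequencies through the quadratic map, which is Lipschitz on the probability simplex, and absorbing the resulting constant into the confidence width $U$. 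Either route keeps the per-round estimate in the $1$-sub-Gaussian regime up to an absolute constant and leaves the $\delta$-PC guarantee intact.
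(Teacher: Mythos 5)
Your proposal follows essentially the same route as the paper: its entire proof of this lemma is your probabilistic step, i.e.\ the good event $\mathcal{B}$ (your $\mathcal{G}$), Lemma~\ref{lemma:FLIL} with the substitution $\delta \mapsto \delta/(c_\varepsilon K)$, and a union bound over the $K$ arms, yielding $\mathbb{P}[\mathcal{B}] \ge 1 - c_\varepsilon K (\delta/(c_\varepsilon K))^{1+\varepsilon} \ge 1-\delta$. You actually go further than the paper — the deterministic elimination/termination argument is deferred there to Theorems~\ref{thm:active-set-SEA} and \ref{thm:sampl-compl-SEA}, and the paper silently applies the one-sided Lemma~\ref{lemma:FLIL} to the two-sided event and to the quadratic estimator $\hat{S}_{i,N_i(t)}$ without the two-sided factor or the reduction to sub-Gaussian sample means that you flag and sketch how to repair.
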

\begin{proof}
    The proof is presented in Appendix \ref{subsubsec:proof-delta-PAC-SUCCELIM}.
\end{proof}

\noindent The correctness of Algorithm \ref{alg:mod-SEA} and the copy complexity of identifying the entangled arm are presented below.
\begin{theorem}
    With probability at least $1 - \delta$, the entangled state $i^\star = K = \arg\min_{i \in [K]} S_\mathcal{E}(\rho_i)$ remains in the active set $\Omega$ till termination.
    \label{thm:active-set-SEA}
\end{theorem}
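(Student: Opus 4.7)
The plan is to define a high-probability ``good event'' on which the empirical estimates $\hat{S}_{i,N_i(t)}$ concentrate around the true values $S_\mathcal{E}(\rho_i)$ uniformly in $i$ and in the number of samples, and to argue that on this event the LCB of the (unique) entangled arm $K$ never crosses the threshold $\zeta = 0$, so that the elimination rule can never remove arm $K$.

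Concretely, for each $i \in [K]$ I would define
\[
G_i = \left\{\, |\hat{S}_{i,n} - S_\mathcal{E}(\rho_i)| \le U\!\left(n,\, \tfrac{\delta}{c_\varepsilon K}\right) \text{ for all } n \ge 1 \,\right\},
\]
and invoke Lemma \ref{lemma:FLIL}, applied to both the upper and lower deviations of the centered, $[-1,1]$-bounded statistic $\hat{S}_{i,n} - S_\mathcal{E}(\rho_i)$. The calibration of the confidence argument at $\delta/(c_\varepsilon K)$ chosen inside Algorithm \ref{alg:mod-SEA} is precisely what makes the per-arm failure probability at most $\delta/K$; a union bound over the $K$ arms then gives the good event $G = \bigcap_{i=1}^K G_i$ with $\mathbb{P}(G) \ge 1 - \delta$.

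On $G$, the remaining argument is essentially one line of algebra. From the definition of $G_K$,
\[
\hat{S}_{K, N_K(t)} \le S_\mathcal{E}(\rho_K) + U\!\left(N_K(t),\, \tfrac{\delta}{c_\varepsilon K}\right) \quad \text{for every round } t,
\]
so
\[
\text{LCB}_K(t) \;=\; \hat{S}_{K, N_K(t)} - U\!\left(N_K(t),\, \tfrac{\delta}{c_\varepsilon K}\right) \;\le\; S_\mathcal{E}(\rho_K) \;<\; 0.
\]
Since the elimination step in Algorithm \ref{alg:mod-SEA} discards an arm only when its LCB strictly exceeds $0$, arm $K$ is never removed from the active set $\Omega$. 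Hence arm $K$ persists until termination, yielding the claim.

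The main technical hurdle is justifying the sub-Gaussian-style concentration of $\hat{S}_{i,n}$ needed to apply Lemma \ref{lemma:FLIL}, because $\hat{S} = 4 \hat{f}_1 \hat{f}_2 - (\hat{f}_3 - \hat{f}_4)^2$ is a \emph{quadratic} function of the empirical frequencies and not an empirical mean of i.i.d.\ samples. The natural route is to use the fact that $\hat{S}$ is bounded in $[-1,1]$ and has bounded differences of order $1/n$, which either supports a McDiarmid-style uniform-in-time bound, or alternatively lets one concentrate each $\hat{f}_j$ separately via Lemma \ref{lemma:FLIL} and propagate the resulting $O(1/\sqrt{n})$ deviation through the quadratic, absorbing the small $O(1/n)$ bias from the multinomial covariance into the confidence width $U(\cdot)$ for large enough $n$. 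Once this concentration is in hand, the good-event argument above gives the theorem.
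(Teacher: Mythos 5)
Your argument is essentially the paper's own proof: you construct the same LIL-based good event (the event $\mathcal{B}$ of Lemma~\ref{lemma:delta-PAC-SUCCELIM}, via Lemma~\ref{lemma:FLIL} and a union bound at level $\delta/(c_\varepsilon K)$), and on that event conclude $\mathrm{LCB}_K(t) \le S_\mathcal{E}(\rho_K) < 0$ so arm $K$ is never eliminated, which is just the direct form of the paper's contradiction argument. Your closing caveat about $\hat{S}$ being a quadratic function of empirical frequencies rather than an i.i.d.\ sample mean is in fact a gap the paper itself silently passes over, so flagging it (and sketching a bounded-differences or error-propagation fix) is a welcome extra rather than a deviation in approach.
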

\begin{proof}
    The proof is presented in Appendix \ref{subsubsec:proof-active-set-SEA}.
\end{proof}

\begin{theorem}
    With probability at least $1-\delta$, Algorithm~\ref{alg:mod-SEA} identifies the entangled state $i^\star$, requiring $\sum_{i \in [K]} \mathcal{O}\left(\Delta_i^{-2} \log \left(\frac{K \log \Delta_i^{-2}}{\delta}\right)\right)$ copies. Here, $\Delta_i = |S_\mathcal{E}(\rho_i) - \zeta|$ denotes the sub-optimality gap with respect to the threshold $\zeta$.

    \label{thm:sampl-compl-SEA}
\end{theorem}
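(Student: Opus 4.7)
The plan is to split the statement into two pieces: the correctness claim, which is already delivered by earlier results, and the quantitative sample complexity, which reduces to bounding how long each sub-optimal arm can survive in the active set $\Omega$. Correctness is immediate from Theorem~\ref{thm:active-set-SEA}: on an event of probability at least $1-\delta$, the arm $i^\star$ never leaves $\Omega$, and since Algorithm~\ref{alg:mod-SEA} only terminates when $|\Omega|=1$, the returned arm must be $i^\star$ on this event. So the entire work is to show that the same event also implies a bound on the total number of samples with the stated form.

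For the sample-complexity analysis I would condition on the same high-probability event already invoked in Theorem~\ref{thm:active-set-SEA}, namely that for every arm $i\in[K]$ and every $t$,
\[
\bigl|\hat S_{i,N_i(t)} - S_{\mathcal E}(\rho_i)\bigr| \;\le\; U\!\left(N_i(t),\tfrac{\delta}{c_\varepsilon K}\right).
\]
By Lemma~\ref{lemma:FLIL} and a union bound over the $K$ arms (which is exactly what turns $\delta$ into $\delta/(c_\varepsilon K)$ inside $U$), this event holds with probability at least $1-\delta$. On this event, fix any sub-optimal arm $i\neq i^\star$, so that $S_\mathcal{E}(\rho_i)=\Delta_i>0$. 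Then
\[
\mathrm{LCB}_i(t) \;=\; \hat S_{i,N_i(t)} - U\!\left(N_i(t),\tfrac{\delta}{c_\varepsilon K}\right) \;\ge\; \Delta_i - 2\,U\!\left(N_i(t),\tfrac{\delta}{c_\varepsilon K}\right),
\]
so a sufficient condition for eliminating arm $i$ at round $t$ is $U(N_i(t),\delta/(c_\varepsilon K))<\Delta_i/2$.

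The main (though fairly standard) technical step is to invert this inequality for $N_i(t)$ using the closed-form expression of $U$ from Lemma~\ref{lemma:FLIL}. The only real subtlety is the $\log\log N$ self-reference inside $U$, which I expect to handle by the usual bootstrap trick: plug a preliminary estimate $N\asymp \Delta_i^{-2}$ into the inner logarithm and verify consistency. This yields the per-arm elimination bound
\[
N_i \;\le\; \mathcal{O}\!\left(\Delta_i^{-2}\log\!\left(\frac{K\log \Delta_i^{-2}}{\delta}\right)\right).
\]
To finish, observe that each sub-optimal arm $i\neq i^\star$ contributes at most this many pulls before being removed from $\Omega$, while $i^\star$ is pulled once per round until the algorithm halts, i.e., at most $\max_{i\neq i^\star} N_i$ times. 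Both contributions are dominated by $\sum_{i\in[K]}\mathcal{O}\!\bigl(\Delta_i^{-2}\log(K\log\Delta_i^{-2}/\delta)\bigr)$, where including the $i=i^\star$ term simply provides a convenient loose upper bound that absorbs the pulls of $i^\star$. This yields the claimed total sample complexity with probability at least $1-\delta$.
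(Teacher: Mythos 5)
Your proposal is correct and follows essentially the same route as the paper: condition on the concentration event of Lemma~\ref{lemma:delta-PAC-SUCCELIM}, note that each separable arm is eliminated once $U\!\left(N_i(t),\tfrac{\delta}{c_\varepsilon K}\right)\le\Delta_i/2$, invert the LIL width (your bootstrap step is exactly what the paper's Lemma~\ref{lemma:useful-LIL-theorem} provides), and sum the per-arm counts. If anything, you are slightly more explicit than the paper about why correctness follows from Theorem~\ref{thm:active-set-SEA} and how the pulls of $i^\star$ itself are accounted for, but the substance of the argument is identical.
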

\begin{proof}
    The proof is presented in Appendix \ref{subsubsec:proof-sampl-compl-SEA}.
\end{proof}

We observe that the sample complexity derived in Theorem \ref{thm:sampl-compl-SEA} is within a $\log(K)$ factor of the optimal bound, as demonstrated in Theorem 1 of \citet{jamieson2014LILUCB}. This result follows from the concentration bound established in Lemma \ref{lemma:FLIL}, which forms the basis for the MAB policy described in the following section.
%Thus, given a $(1,K)$-quantum MAB framework prescribed by $(\mathcal{A}, \mathcal{E})$, we use the recipe provided in Algorithm \ref{alg:mod-SEA} to identify the entangled arm. 

\subsection{lil'HDoC Algorithm}
\label{SUBSEC:LIL-HDOC-ALGORITHM}

The lil'HDoC algorithm \citep{tsai2024lilhdoc} builds on the HDoC algorithm \citep{kano2019GAI} by leveraging finite LIL concentration bounds (Lemma \ref{lemma:FLIL}) instead of the LCB-based identification rule \citep{Kalyanakrish2012PACsubset}. To explore among promising arms, lil'HDoC adopts the sampling rule from \citet{kano2019GAI}, derived from the UCB algorithm for regret minimization \citep{auer2002Regret}. It improves sample complexity over HDoC by utilizing the LIL bound, where the $\sqrt{\log\log t/t}$ factor has a higher growth rate than the $\sqrt{\log t/t}$ factor in the LCB bound. In other words, there exists a value $T$ such that for all $t > T$, $c_1, c_2 \in \mathbb{R}^+$, 
\begin{equation*}
    c_1 \sqrt{\frac{\log t}{t}} > c_2 \sqrt{\frac{\log\log t}{t}}.
\end{equation*}
The confidence bound for HDoC grows as $\alpha(t) = \sqrt{\ln(\frac{4Kt^2}{\delta})/2t}$. Through straightforward calculations, the smallest integer $T$ such that the confidence bound $U\left(T,\delta/c_\varepsilon K\right)$ is greater than $\alpha(T)$ is,
\begin{equation}
    T \ge \frac{1}{4} \log(K+1) \log \left(\max \left( \frac{1}{\delta}, 2 \right) \right) c_\varepsilon^{3/2}. 
    \label{eq:min-T-lilhdoc}
\end{equation}
%Initially sampling each arm at least $T$ times accelerates the pace at which its confidence bound grows, but also attains adequate confidence in identifying the good arms.
%
Thus, if each state is measured $T$ times initially, lil'HDoC achieves comparable identification capabilities to HDoC with $\mathcal{O} \left( \log (K+1) \log \left(\max \left( 1/\delta, 2 \right) \right) \right)$ copies of each state. We note that small values of $\epsilon$ tighten the confidence radius and therefore incur more samples before elimination, while large $\epsilon$ values reduce the number of samples with a higher chance of premature arm elimination. The asymptotic growth of $U(t,\delta)$ with $\epsilon$ is sublinear, and the policy's correctness remains unaffected for $\epsilon > 0$. The `warm-start' phase parameter $T$ controls the number of measurements collected before adaptive allocation begins. Once the threshold in \eqref{eq:min-T-lilhdoc} is crossed, the $\delta$-correct guarantees and copy complexity depend primarily on $(\Delta_i, K,\delta)$ and not on $T$ itself.

%Now, let us map the lil'HDoC algorithm outlined as Algorithm \ref{alg:lilhdoc} onto the $(m, K)$-quantum MAB problem and characterise the expected stopping time. 
%
\begin{algorithm}[htbp]
    \caption{\textsc{lil'HdoC Algorithm}}
    \begin{algorithmic}
        \REQUIRE $\zeta = 0$, $\delta$, $\mathcal{A}$, WBM $\mathcal{E}$
        \ENSURE $\mathcal{A}_{\text{ent}}$
        \STATE Initialize active set $\Omega \gets \mathcal{A}$, $\mathcal{A}_{\text{ent}} \gets \emptyset$
        \STATE Set: $t \gets 0, N_i(t) \gets 0, \hat{S}_{i, N_i(t)} \gets 0, \ \forall i \in \Omega$
        \FOR{$i = 1,2, \ldots K$}
            \STATE Perform $T$ measurements $\mathcal{E}$ on $\rho_i$ 
            \STATE Update $t, N_i(t), \hat{S}_{i, t}$ based on outcomes 
        \ENDFOR
        \WHILE{$\Omega \ne \emptyset$} 
            \STATE Find $h_t = \arg\max_{i \in \mathcal{A}}  \ \hat{S}_{i, N_i(t)} + \sqrt{\frac{\log t }{2N_i(t)}}$
            \STATE Perform measurement $\mathcal{E}$ on $\rho_{h_t}$ 
            \STATE Update $t, N_{h_t}(t), \hat{S}_{h_t, N_{h_t}(t)}$ based on outcomes 
            \STATE Update confidence width $U\left(N_i(t),\frac{\delta}{c_\varepsilon K}\right)$ 
            \IF {$\hat{S}_{h_t, N_{h_t}(t)} - U\left(N_{h_t}(t),\frac{\delta}{c_\varepsilon K}\right) \ge \zeta$}
                \STATE Remove $h_t$ from $\Omega$
            \ELSIF{$\hat{S}_{h_t, N_{h_t}(t)} + U\left(N_{h_t}(t),\frac{\delta}{c_\varepsilon K}\right) < \zeta$}
                \STATE Add $h_t$ to $\mathcal{A}_{\text{ent}}$
                \STATE Remove $h_t$ from $\Omega$
            \ENDIF
        \ENDWHILE
    \end{algorithmic}
    \label{alg:lilhdoc}
\end{algorithm}

Consider $K$ states such that $S_\mathcal{E}(\rho_1) \ge S_\mathcal{E}(\rho_2) \ldots > S_\mathcal{E}(\rho_{K-m}) > 0 > S_\mathcal{E}(\rho_{K-m+1}) \ldots > S_\mathcal{E}(\rho_K)$, with $m$ being unknown. The algorithm takes as input, the set of states $\mathcal{A}$, threshold $\zeta = 0$, WBM $\mathcal{E}$ and the error probability $\delta$ and outputs $\mathcal{A}_{\text{ent}} = \{ i \in [K] \ \text{such that} \ S_\mathcal{E}(\rho_i) < 0\}$. The algorithm maintains an active set $\Omega$ and terminates when the set $\Omega = \emptyset$. To demonstrate the correctness of Algorithm \ref{alg:lilhdoc}, we first show that the algorithm is $(\lambda, \delta)$-correct for all $\lambda \in [K]$ and then characterize the copy complexity of identifying $m$ entangled states.

\begin{lemma}
    Algorithm \ref{alg:lilhdoc} is $\delta$-correct.
    \label{lemma:delta-pac-lilhdoc}
\begin{proof}
    The proof is presented in Appendix \ref{subsubsec:proof-delta-PAC-lilhdoc}. 
\end{proof}
\end{lemma}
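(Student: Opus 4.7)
The plan is to reduce $\delta$-PAC correctness to a single high-probability concentration event that holds uniformly over time and over arms. Define the good event
\[\mathcal{G} = \bigcap_{i \in [K]} \bigcap_{t \ge 1} \Bigl\{ \bigl| \hat{S}_{i,N_i(t)} - S_\mathcal{E}(\rho_i)\bigr| \le U\!\bigl(N_i(t),\tfrac{\delta}{c_\varepsilon K}\bigr) \Bigr\}.\]
Since $S_\mathcal{E}(\rho)\in[-1,1]$ and the per-pull contributions to $\hat{S}_{i,\cdot}$ are therefore $1$-subgaussian, two applications of Lemma \ref{lemma:FLIL} (once to the centered quantity and once to its negative) with confidence parameter $\delta/(c_\varepsilon K)$ control each one-sided deviation of arm $i$ uniformly in $t$. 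A union bound over the $K$ arms and the two directions then gives $\mathbb{P}(\mathcal{G}) \ge 1 - \delta$; the division by $c_\varepsilon K$ inside $U$ is precisely what makes this union bound close.

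Conditioned on $\mathcal{G}$, I would show that the algorithm never misclassifies any arm. If $S_\mathcal{E}(\rho_i) \ge 0$ (separable), then $\hat{S}_{i,N_i(t)} + U \ge S_\mathcal{E}(\rho_i) \ge 0 = \zeta$ at every round, so the test $\hat{S}_{h_t,N_{h_t}(t)} + U < \zeta$ that would add arm $i$ to $\mathcal{A}_{\text{ent}}$ can never fire. Symmetrically, if $S_\mathcal{E}(\rho_i) < 0$ (entangled), then $\hat{S}_{i,N_i(t)} - U \le S_\mathcal{E}(\rho_i) < 0 = \zeta$, so the test $\hat{S}_{h_t,N_{h_t}(t)} - U \ge \zeta$ that would discard arm $i$ as separable can never fire. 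Hence every arm, whenever it leaves the active set $\mathcal{A}$, lands in the correct bin, and $\mathcal{A}_{\text{ent}}$ remains a subset of the truly entangled arms throughout the run. Since $U(t,\cdot)\to 0$ and the UCB-style sampling rule guarantees that an arm left in $\mathcal{A}$ forever would be pulled infinitely often, every arm is eventually removed under the standard strict-sign assumption, so upon termination $\mathcal{A}_{\text{ent}}$ coincides with the set of entangled arms.

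With this sign-respecting dichotomy in hand, both $(\lambda,\delta)$-PAC conditions follow for every $\lambda \in [K]$. If at least $\lambda$ arms are entangled, then on $\mathcal{G}$ each reported $\hat{X}_j$ is genuinely entangled and $\hat{m} = m \ge \lambda$; both events in the union defining the error are therefore impossible on $\mathcal{G}$, so the failure probability is bounded by $\mathbb{P}(\mathcal{G}^c) \le \delta$. If fewer than $\lambda$ arms are entangled, no separable arm ever enters $\mathcal{A}_{\text{ent}}$ on $\mathcal{G}$, so $\hat{m} \le m < \lambda$ and again $\mathbb{P}(\hat{m} \ge \lambda) \le \delta$. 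Since $\lambda$ was arbitrary, Algorithm \ref{alg:lilhdoc} is $\delta$-PAC.

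The main technical delicacy is the calibration of the union bound to the exact form of the LIL tail given in Lemma \ref{lemma:FLIL}: the per-arm one-sided failure probability is $c_\varepsilon(\delta/(c_\varepsilon K))^{1+\varepsilon}$, and one must verify that $2K$ times this expression is at most $\delta$ under the admissible ranges of $\delta$ and $\varepsilon$ allowed by the lemma. Once that bookkeeping is done the rest of the argument is purely deterministic, namely a comparison of empirical upper and lower confidence bounds against the threshold $\zeta = 0$. In particular, correctness exploits neither the specific UCB sampling rule nor the initial $T$-pull warm-up, both of which are reserved for the sample-complexity analysis.
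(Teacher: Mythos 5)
Your overall strategy---condition on a single uniform concentration event $\mathcal{G}$ and then argue deterministically that no arm is ever placed in the wrong bin---is sound, and the deterministic half of your argument is essentially the content of Theorem \ref{thm:active-set-lilhdoc}. The gap is in the probability bookkeeping that you flag at the end but do not close, and as written it does fail. Your event $\mathcal{G}$ controls \emph{both} tails for all $K$ arms, so Lemma \ref{lemma:FLIL} plus a union bound gives failure probability $2K\,c_\varepsilon\bigl(\delta/(c_\varepsilon K)\bigr)^{1+\varepsilon} = 2\delta\bigl(\delta/(c_\varepsilon K)\bigr)^{\varepsilon}$, which is at most $\delta$ only when $\bigl(\delta/(c_\varepsilon K)\bigr)^{\varepsilon} \le 1/2$. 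This is not guaranteed in the admissible range of the lemma: for example with $\varepsilon = 0.01$, $K=5$ and $\delta$ near its allowed ceiling $\log(1+\varepsilon)/e \approx 3.7\times 10^{-3}$, one finds $\bigl(\delta/(c_\varepsilon K)\bigr)^{\varepsilon}\approx 0.84$, so your union bound yields roughly $1.7\,\delta$ rather than $\delta$. Hence the claim $\mathbb{P}(\mathcal{G})\ge 1-\delta$ does not follow from the stated calibration, and since the algorithm's confidence width is fixed at $U\bigl(\cdot,\delta/(c_\varepsilon K)\bigr)$ you cannot simply re-tune the width to absorb the factor $2$.

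The repair is exactly the one-sidedness that the paper's proof exploits. For each arm only one tail is ever relevant: an entangled arm ($S_\mathcal{E}(\rho_i)<0$) can be misclassified only if its empirical estimate deviates \emph{upward} by more than $U$ (so that the lower-confidence test fires), and a separable arm only if it deviates \emph{downward} (so that the upper-confidence test fires). Defining the good event with just the relevant one-sided deviation per arm leaves your deterministic argument untouched and reduces the count to $K$ events, each of probability at most $c_\varepsilon\bigl(\delta/(c_\varepsilon K)\bigr)^{1+\varepsilon}\le \delta/K$, so the union bound closes at $\delta$. The paper organizes this slightly differently---it bounds the two error events in the $(\lambda,\delta)$-PAC definition directly, with the $m$ target arms contributing lower-tail terms and the remaining $K-m$ arms upper-tail terms, the two groups together summing to $\delta$---but the counting is the same. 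With that single change your proof goes through; your remarks that termination and correctness need neither the warm-up of $T$ pulls nor the specific UCB sampling rule are consistent with the paper.
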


\begin{theorem}
    With probability at least $1 - \delta$, the algorithm identifies all the states in $\mathcal{A}_{\text{ent}}$.
    \label{thm:active-set-lilhdoc}
\end{theorem}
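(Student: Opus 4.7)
The plan is to prove Theorem~\ref{thm:active-set-lilhdoc} by constructing a favorable event of probability at least $1-\delta$ on which the finite LIL confidence bound of Lemma~\ref{lemma:FLIL} holds uniformly in time for every arm, and then verifying that on this event the identification rule of Algorithm~\ref{alg:lilhdoc} cannot misclassify any arm and eventually terminates. The argument proceeds in the same three steps used elsewhere in the paper: deviation control, classification correctness, and termination.

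First, for each arm $i \in [K]$, I define
\begin{equation*}
\mathcal{G}_i := \big\{ |\hat{S}_{i,N_i(t)} - S_\mathcal{E}(\rho_i)| \le U(N_i(t),\delta/(c_\varepsilon K)) \text{ for all } t \ge 1 \big\}.
\end{equation*}
Since Lemma~\ref{lemma:FLIL} is one-sided, I apply it at confidence level $\delta/(c_\varepsilon K)$ to both $\hat{S}_{i,\cdot}-S_\mathcal{E}(\rho_i)$ and its negation. A union bound over the two sides and over the $K$ arms bounds the failure probability by $2K\cdot c_\varepsilon(\delta/(c_\varepsilon K))^{1+\varepsilon} = 2\delta^{1+\varepsilon}/(c_\varepsilon^\varepsilon K^\varepsilon)$, which can be made at most $\delta$ by choosing $\varepsilon \in (0,1)$ appropriately. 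Hence $\mathcal{G} := \bigcap_i \mathcal{G}_i$ satisfies $\mathbb{P}[\mathcal{G}] \ge 1-\delta$.

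Next, on $\mathcal{G}$ I verify that misclassification is impossible. For any $i \in \mathcal{A}_{\text{ent}}$, $S_\mathcal{E}(\rho_i) < 0 = \zeta$, so the LCB satisfies $\hat{S}_{i,N_i(t)} - U(N_i(t),\delta/(c_\varepsilon K)) \le S_\mathcal{E}(\rho_i) < 0$ for every $t$, and hence such an arm is never erroneously removed from $\mathcal{A}$ as separable. Symmetrically, for any $i \notin \mathcal{A}_{\text{ent}}$, the UCB satisfies $\hat{S}_{i,N_i(t)} + U(N_i(t),\delta/(c_\varepsilon K)) \ge S_\mathcal{E}(\rho_i) \ge 0$, so such an arm is never added to $\mathcal{A}_{\text{ent}}$. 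For termination, the UCB sampling rule $h_t = \arg\max_i \hat{S}_{i,N_i(t)} + \sqrt{\log t/(2N_i(t))}$, combined with the mandatory $T$-round warm-start in \eqref{eq:min-T-lilhdoc}, ensures that every active arm is pulled infinitely often along any trajectory, so $U(N_i(t),\cdot) \to 0$ and the correct classification threshold is strictly crossed for every arm in finite time (under the implicit assumption that no $S_\mathcal{E}(\rho_i)$ lies exactly at the boundary $0$).

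The main obstacle is the calibration in the first step: because Lemma~\ref{lemma:FLIL} is parametrised by $(\varepsilon,\delta)$ with deviation probability $c_\varepsilon\delta^{1+\varepsilon}$, one must pick $\varepsilon$ and rescale $\delta$ so that the two-sided, $K$-arm union bound still yields a bad-event probability of at most $\delta$ uniformly over the admissible range of $\delta$ specified in the lemma. This calibration is technical but follows the template used in \cite{jamieson2014LILUCB} for the lil'UCB analysis and in \cite{tsai2024lilhdoc} for lil'HDoC. Once it is in hand, the correctness and termination conclusions fall out directly from the definitions of the LCB/UCB identification rule, and the same deviation event already underlies the $\delta$-PAC guarantee of Lemma~\ref{lemma:delta-pac-lilhdoc}.
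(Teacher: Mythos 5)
Your proposal is correct and follows essentially the same route as the paper: condition on the uniform LIL concentration event of Lemma~\ref{lemma:FLIL} (union-bounded over arms at level $\delta/(c_\varepsilon K)$, exactly as in the proof of Lemma~\ref{lemma:delta-PAC-SUCCELIM}), then observe that on this event no arm with $S_\mathcal{E}(\rho_i)<0$ can have its LCB exceed the threshold, so no entangled arm is ever discarded as separable. Your additional remarks on the two-sided calibration of the bound and on termination via the UCB sampling rule are details the paper leaves implicit, but they do not change the argument.
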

\begin{proof}
    The proof is presented in Appendix \ref{subsubsec:proof-active-set-lilhdoc}.
\end{proof}
%Let us assume that the value of $T$ in \eqref{eq:min-T-lilhdoc} is equal to 1. Then, from Theorem \ref{thm:sampl-compl-SEA}, the number of samples required to identify an arm $i \in [K]$ where $S_\mathcal{E}(\rho_i) < 0$ is $\mathcal{O}\left( \Delta_i^{-2} \log \left( \frac{K \log \Delta_i^{-2}}{\delta}\right)\right)$. However, in practice, $T$ is larger than 1, and the total sample complexity is expressed in terms of $\Delta = \min_{i \in [K]} \Delta_i$.

With $T=1$ in \eqref{eq:min-T-lilhdoc}, it can be seen 
from Theorem \ref{thm:sampl-compl-SEA} that the number of samples required to identify an entangled state $\rho_i \in \mathcal{A}$ is $\mathcal{O}\left( \Delta_i^{-2} \log \left( \frac{K \log \Delta_i^{-2}}{\delta}\right)\right)$. However, in practice, $T$ is chosen to be larger than 1, and the total sample complexity is expressed in terms of $\Delta = \min_{i \in [K]} \Delta_i$.

\begin{theorem}
     With probability $1-\delta$ and an initialization of $T$ measurements, Algorithm \ref{alg:lilhdoc} identifies the entangled states using  $\mathcal{O}\left(\Delta^{-2}\left( K\log\frac{1}{\delta} + K \log K + K \log \log \frac{1}{\Delta}\right)\right) + \mathcal{O}\left(K\log(K+1)\log\left(\max\left(\frac{1}{\delta}, e\right)\right)\right)$ copies.
    \label{thm:sampl-compl-lilhdoc}
\end{theorem}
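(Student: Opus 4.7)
The plan is to split the total pulls made by Algorithm \ref{alg:lilhdoc} into two additive parts that match the two terms of the stated bound: an initial \textbf{warm-up phase}, in which every arm is pulled exactly $T$ times, and a subsequent \textbf{adaptive phase}, governed by the UCB sampling rule together with the LIL-based identification rule. The warm-up phase contributes exactly $KT$ pulls deterministically, and substituting the value of $T$ from \eqref{eq:min-T-lilhdoc} produces the additive term $\mathcal{O}(K\log(K+1)\log(\max(1/\delta,e)))$ with no further work.

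For the adaptive phase, I would condition on the good event $\mathcal{G}$ on which, for every arm $i\in [K]$ and every $t\geq 1$, the empirical estimate $\hat{S}_{i,N_i(t)}$ lies within the LIL confidence half-width $U\bigl(N_i(t),\delta/(c_\varepsilon K)\bigr)$ of the true value $S_\mathcal{E}(\rho_i)$. Invoking Lemma \ref{lemma:FLIL} at confidence level $\delta/(c_\varepsilon K)$ and union-bounding over the $K$ arms gives $\mathbb{P}(\mathcal{G})\geq 1-\delta$, which supplies the probabilistic guarantee demanded by the theorem. On $\mathcal{G}$ the identification rule must fire once the half-width drops below $\Delta_i/2$, because then either $\hat{S}_{i,N_i(t)} - U \geq 0$ or $\hat{S}_{i,N_i(t)} + U < 0$ holds and arm $i$ is removed from $\mathcal{A}$. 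Inverting the inequality $U(N,\delta/(c_\varepsilon K))\leq \Delta_i/2$ for $N$, exactly as in the proof of Theorem \ref{thm:sampl-compl-SEA}, yields the per-arm pull budget
\[
N_i^{\max} \;=\; \mathcal{O}\!\Bigl(\Delta_i^{-2}\log\bigl(K\log(\Delta_i^{-2})/\delta\bigr)\Bigr).
\]
Replacing each $\Delta_i^{-2}$ by the worst-case $\Delta^{-2}=\max_i \Delta_i^{-2}$, summing over the $K$ arms, and distributing the logarithm as $\log(K\log(\Delta^{-2})/\delta)=\log K+\log(1/\delta)+\log\log(1/\Delta)+O(1)$ gives the first term $\mathcal{O}(\Delta^{-2}(K\log(1/\delta)+K\log K+K\log\log(1/\Delta)))$.

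The step I expect to require the most care is justifying that the UCB-style sampling rule $h_t=\arg\max_{i\in\mathcal{A}}\hat{S}_{i,N_i(t)}+\sqrt{\log t/(2N_i(t))}$ cannot cause an active arm $i$ to accumulate more than $N_i^{\max}$ pulls before elimination. The key observation is that the identification rule acts on the \emph{LIL} confidence interval rather than on the UCB exploration term, so the moment $N_i(t)$ reaches $N_i^{\max}$ the appropriate stopping condition fires on $\mathcal{G}$, regardless of how the sampling rule has distributed its pulls across the remaining active arms. Consequently the per-arm count is deterministically capped at $N_i^{\max}$ on the good event. A mild subtlety is that the exploration bonus $\sqrt{\log t/(2N_i(t))}$ depends on the global time $t$, but since $t$ at termination is bounded by $\sum_i N_i^{\max}$, replacing $\log t$ by $\log\sum_i N_i^{\max}$ contributes only lower-order factors that are already absorbed into the $\mathcal{O}(\cdot)$. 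Adding $\sum_i N_i^{\max}$ to the warm-up contribution $KT$ yields the announced complexity.
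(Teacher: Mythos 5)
Your proposal is correct and follows essentially the same route as the paper: the second term is just the deterministic warm-up cost $KT$ with $T$ from \eqref{eq:min-T-lilhdoc}, and the first term comes from inverting the LIL confidence width per arm exactly as in the proof of Theorem~\ref{thm:sampl-compl-SEA}, summed over $K$ arms with $\Delta=\min_i\Delta_i$. Your additional justifications (the union-bounded good event giving the $1-\delta$ guarantee, and the observation that the identification rule caps each arm's pulls at $N_i^{\max}$ irrespective of the UCB sampling order) are details the paper leaves implicit, not a different argument.
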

\begin{proof}
    The first term in the sample complexity is derived in Appendix \ref{subsubsec:proof-sampl-compl-SEA} and the second term follows from \eqref{eq:min-T-lilhdoc}.
\end{proof}

%%%%%%%%%%%%%%%%%%%%%%%%%%%%%%%%%%%%%%%%%%%%%%%
\section{Implementation and Simulations on IBMQ Cloud}
\label{sec:experiments}

This section presents an experimental workflow for detecting entangled states from an ensemble of Bell Diagonal states. Sections \ref{subsec:gen-BDS} and \ref{subsec:WBM-ent-det} describe the procedures for generating Bell Diagonal states (BDS) and their corresponding WBMs, respectively. The performance of the MAB policies (see Section \ref{sec:solution}) is presented through numerical findings in Sections \ref{subsec:qiskit-exp}. 

\subsection{Generating Bell Diagonal States}
\label{subsec:gen-BDS}

Bell Diagonal States (BDS) are constructed as convex combinations of the four Bell states \eqref{eq:bell-diag}, forming a geometric tetrahedron, $\mathcal{T}$ and are represented by:
\begin{equation}
    \rho_{\scaleto{\text{BDS}}{3pt}} = \sum_{j=1}^{4} p_{j} \ket{\Upsilon} \bra{\Upsilon} = \frac{1}{4} \left[ I + \sum_{j=1}^{3} t_j \sigma_j^A \otimes \sigma_j^B \right].
    \label{eq:bds}
\end{equation}
Here, $\sigma_j$'s are the Pauli operators and $(t_1, t_2, t_3)$ are the coordinates within the tetrahedron $\mathcal{T}$. The mapping $\{p_{j}\}_{j=1}^{4} \to (t_1, t_2, t_3)$ \eqref{eq:canonical-probs} is implemented through the quantum circuit proposed by \citet{pozzobom2019preparing, riedel2021bell} and is shown in Fig.~\ref{fig:four-qubit-bds-circ}. 
\begin{align}
\sqrt{p_{1}} = \cos(\psi), \quad   
\sqrt{p_{2}} = \sin(\psi) \cos(\theta), \quad  
\sqrt{p_{3}} = \sin(\psi) \sin(\theta) \cos(\varphi), \quad  
\sqrt{p_{4}} = \sin(\psi) \sin(\theta) \sin(\varphi)
\label{eq:canonical-probs}
\end{align}
The sub-circuit $G$ encodes the probabilities $\{p_{j}\}_{j=1}^{4}$ into canonical coordinates $(\psi, \theta, \varphi)$ on the unit 3-sphere, and sub-circuit $B$ entangles the states in the Bell basis. Finally, the Bell-Diagonal state $\rho_{\scaleto{\text{BDS}}{3pt}} = \rho_{cd}$ is obtained by taking a partial trace over qubits $a$ and $b$.

\begin{figure}[h]
    \centering
    \begin{equation*}
    \Qcircuit @C=1em @R=1em {
        \lstick{\ket{0}_a} & \multigate{1}{G} & \qw & \ctrl{2} & \qw & \qw & \qw & \qw \\
        \lstick{\ket{0}_b} & \ghost{G} & \qw & \qw & \ctrl{2} & \qw & \qw & \qw \\
        \lstick{\ket{0}_c} & \qw & \qw & \targ & \qw & \gate{H} & \ctrl{1} & \qw \\
        \lstick{\ket{0}_d} & \qw & \qw & \qw & \targ & \qw & \targ & \qw
    }
    \end{equation*}

    \setlength{\unitlength}{1cm}
    \begin{picture}(0,0)
        \linethickness{0.3mm}
        \multiput(-1, 0.2)(0, 0.28){10}{\line(0,1){0.1}}
        \put(-1.2, 3){$\ket{\psi}_{ab}$}
        
        \put(2.4, 0.6){\resizebox{!}{0.8cm}{$\}$}} 
        \put(3, 0.8){$\rho_{cd}$}
        
        \put(2.4, 2){\resizebox{!}{0.8cm}{$\}$}} 
        \put(3, 2.2){\scriptsize{\text{unread (env.)}}}

        \linethickness{0.3mm}
        \multiput(0.6, 0.1)(0.1, 0){15}{\line(1,0){0.05}} % Bottom line
        \multiput(0.6, 1.5)(0.1, 0){15}{\line(1,0){0.05}} % Top line
        \multiput(0.6, 0.1)(0, 0.1){14}{\line(0,1){0.05}} % Left line
        \multiput(2.05, 0.1)(0, 0.1){14}{\line(0,1){0.05}} % Right line
        \put(2.1, 0){$B$}
    \end{picture}

    \vspace{1pt}
    \begin{equation*}
    \Qcircuit @C=1em @R=1em {
        \lstick{\ket{0}_a} & \qw & \qw & \gate{R_y(2\theta)} & \ctrl{1} & \qw & \qw \\
        \lstick{\ket{0}_b} & \qw & \gate{R_y(2\psi)} & \ctrl{-1} & \gate{R_y(-2\varphi)} & \qw & \qw
    }
    \end{equation*}

    \setlength{\unitlength}{1cm}
    \begin{picture}(0,0)
        \linethickness{0.3mm}
        \multiput(-2.7, 0)(0.1, 0){55}{\line(1,0){0.05}} % Bottom line
        \multiput(-2.7, 2)(0.1, 0){55}{\line(1,0){0.05}} % Top line
        \multiput(-2.7, 0)(0, 0.1){21}{\line(0,1){0.05}} % Left line
        \multiput(2.8, 0)(0, 0.1){21}{\line(0,1){0.05}} % Right line
        \put(3.1, 2){\makebox(0,0)[c]{$G$}}

        \put(3.1, 0.8){\resizebox{!}{0.9cm}{$\}$}} 
        \put(3.7, 1.2){\makebox(0,0)[left]{$\ket{\psi}_{ab}$}} 
    \end{picture}
    \caption{Four-qubit circuit for generating BDS with canonical encoder $G$ shown below.}  
    \label{fig:four-qubit-bds-circ}
\end{figure}

\subsection{Implementing Witness Basis Measurements}
\label{subsec:WBM-ent-det}
As outlined in Table \ref{tab:bell-diag-WBM}, BDS are detectable under WBMs $\mathcal{E}_1$ and $\mathcal{E}_2$. To measure in the Pauli-Z (computational) basis, we apply appropriate unitary transformations to $\mathcal{E}_1$ and $\mathcal{E}_2$. The corresponding transformations are realized through circuits $\mathrm{CIRC}_{\mathcal{E}_1}$ and $\mathrm{CIRC}_{\mathcal{E}_2}$ shown in Fig. \ref{fig:WBM-circ} and applied to qubits $c$ and $d$ (see Fig.~\ref{fig:four-qubit-bds-circ}) before measurement.   

\begin{figure}[ht]
    \centering
    \begin{minipage}{0.5\textwidth}
        \begin{equation*}
            \Qcircuit @C=1em @R=1em {
                \lstick{} & \ctrl{1} & \gate{H} & \qw & \meter & \qw \\
                \lstick{} & \targ & \ctrl{-1} & \qw & \meter & \qw 
            }
        \end{equation*}
        \setlength{\unitlength}{1cm}
        \begin{picture}(0,0)
            \put(1.5, 0.6){\resizebox{!}{0.8cm}{$\{$}} 
            \put(1, 0.8){$\rho_{cd}$}
        \end{picture}

        %\caption{Unitary applied on $\mathcal{E}_1$}
    \end{minipage}
    \begin{minipage}{0.5\textwidth}
        \begin{equation*}
            \Qcircuit @C=1em @R=1em {
                \lstick{} & \qw & \ctrl{1} & \gate{H} & \qw & \meter & \qw  \\
                \lstick{} & \gate{X} & \targ & \ctrl{-1} & \qw & \meter & \qw
            }
        \end{equation*}
        \setlength{\unitlength}{1cm}
        \begin{picture}(0,0)
            \put(1, 0.6){\resizebox{!}{0.8cm}{$\{$}} 
            \put(0.5, 0.8){$\rho_{cd}$}
        \end{picture}
        %\caption{Unitary applied on $\mathcal{E}_2$}
    \end{minipage}
    \caption{Circuits $\mathrm{CIRC}_{\mathcal{E}_1}$ (top) and $\mathrm{CIRC}_{\mathcal{E}_2}$ (bottom) perform the unitary transformations required to map $\mathcal{E}_1$ and $\mathcal{E}_2$ into the Pauli-Z basis.}
    \label{fig:WBM-circ}
\end{figure}

\subsection{Workflow for entanglement detection}
\label{subsec:MAB-exp}
We propose a workflow for detecting entanglement in BDS without assuming prior knowledge of the specific WBM. Instead, WBMs are sequentially adapted using suitable unitary transformations, as detailed in Section \ref{subsubsec:WBM-related-back}.  
To generate a set $\mathcal{A} = \{\rho_1, \rho_2, \ldots, \rho_K\}$ of BDS, we construct $K$ sets of probabilistic mixtures for combining the four Bell states. Specifically, $m$ states are generated with $\max_{j} p_{j} > \frac{1}{2}$, while the remaining $K - m$ states satisfy $\max_{j} p_{j} \le \frac{1}{2}$. These probabilities are encoded following the procedure outlined in Fig.~\ref{fig:four-qubit-bds-circ}, where the BDS circuit for state $\rho_i$ is denoted as $\mathrm{BDS}_i$. Subsequently, one of two WBM circuits, $\mathrm{CIRC}_{\mathcal{E}_1}$ or $\mathrm{CIRC}_{\mathcal{E}_2}$, is appended to the respective BDS circuit. Algorithm \ref{alg:workflow-for-bds} outlines this workflow for BDS and takes the following inputs: threshold $\zeta = 0$, error $\delta$, BDS circuits $\{\mathrm{BDS}_i\}$, WBM circuits $\mathrm{CIRC}_{\mathcal{E}_1}$ and $\mathrm{CIRC}_{\mathcal{E}_2}$ and the initial choice of WBM. Notably, the initial WBM selection is arbitrary, as the sequence of WBM adaptations does not rely on state estimation.

\begin{algorithm}[t]
    \caption{Workflow for Entanglement Detection in BDS}
    \label{alg:workflow-for-bds}
    \begin{algorithmic}
        \renewcommand{\algorithmicrequire}{\textbf{Input:}}
        \renewcommand{\algorithmicensure}{\textbf{Output:}}
        \REQUIRE $\zeta = 0$, $\delta$, $\{\mathrm{BDS}_i\}$, $\mathrm{CIRC}_{\mathcal{E}}$, WBM choice = 1
        \ENSURE $A_{\text{ent}}$, Stopping time $\tau$      
        \STATE Run Algorithm \ref{alg:lilhdoc} on $\{\mathrm{BDS}_i\}$ with circuit $\mathrm{CIRC}_{\mathcal{E}_1}$ on $K$ states 
        \STATE Return entangled states $|A_{\text{ent}, 1}| = \Tilde{m}$ and stopping time $\tau_1$.     
        \IF {($\Tilde{m} = K$)}
            \STATE $A_{\text{ent}, 2} \gets \emptyset$, $\tau_2 \gets 0$.    
        \ELSIF{$\Tilde{m} < K$}
            \STATE Run Algorithm \ref{alg:lilhdoc} on $\{\mathrm{BDS}_i\}$ with circuit $\mathrm{CIRC}_{\mathcal{E}_2}$ on $K - \Tilde{m}$ states 
        \STATE Return entangled states $A_{\text{ent}, 2}$ and stopping time $\tau_2$.      
        \ENDIF
        \STATE $A_{\text{ent}} \gets A_{\text{ent}, 1} + A_{\text{ent}, 2}$,  $\ \tau \gets \tau_1 + \tau_2$ 
        %\ENDWHILE
    \end{algorithmic}
\end{algorithm}

The learner does not initially know under which WBM the BDS are detectable. Consequently, at least one iteration of Algorithm \ref{alg:lilhdoc} must be executed. In the first iteration, the algorithm processes circuits corresponding to $K$ states with WBM $\mathcal{E}_1$ (or $\mathcal{E}_2$) and identifies a subset of entangled states, $\Tilde{m}$, where $0 \leq \Tilde{m} \leq K$. In the second iteration, Algorithm \ref{alg:lilhdoc} is applied to the $K - \Tilde{m}$ states that remain undetected by using circuits with WBM $\mathcal{E}_2$ (or $\mathcal{E}_1$) as inputs. 

\begin{corollary}
     Let $\Delta_1 \coloneqq \min |\boldsymbol{S}_{\mathcal{E}_1}|$, $\Delta_2 \coloneqq \min |\boldsymbol{S}_{\mathcal{E}_2}|$ and $\Delta_{\text{min}} = \min \{\Delta_1, \Delta_2\}$, then, with probability $1-\delta$ and $T=1$, Algorithm \ref{alg:workflow-for-bds} identifies entangled BDS using  $\mathcal{O}\left(\Delta_{\text{min}}^{-2}\left( K\log\frac{1}{\delta} + K \log K + K \log \log \frac{1}{\Delta_{\text{min}}}\right)\right)$ copies.
    \label{cor:sampl-compl-lilhdoc}
\end{corollary}

\iffalse
 \begin{table*}[ht]
        \centering
        \caption{Sample Results of Arms Across Different Computational Environments}
        \begin{adjustbox}{max width=\textwidth}
            \begin{tabular}{|c|cc|cc|cc|cc|}
                \hline
                \multirow{2}{*}{Arms} & \multicolumn{2}{c|}{$S_{\text{true}}$} & \multicolumn{2}{c|}{$S_{\text{aer}}$} & \multicolumn{2}{c|}{$S_{\text{fake backend}}$} & \multicolumn{2}{c|}{$S_{\text{ibm-bru}}$} \\
                \cline{2-9}
                & $\mathcal{E}_1$ & $\mathcal{E}_2$ & $\mathcal{E}_1$ & $\mathcal{E}_2$ & $\mathcal{E}_1$ & $\mathcal{E}_2$ & $\mathcal{E}_1$ & $\mathcal{E}_2$ \\
                \hline
                1 & 0.15 & 0.32 & 0.18 & 0.30 & 0.14 & 0.29 & 0.13 & 0.27 \\
                2 & 0.22 & 0.35 & 0.25 & 0.34 & 0.21 & 0.33 & 0.20 & 0.30 \\
                3 & 0.10 & 0.28 & 0.12 & 0.26 & 0.09 & 0.25 & 0.08 & 0.23 \\
                4 & 0.18 & 0.40 & 0.20 & 0.38 & 0.17 & 0.37 & 0.16 & 0.35 \\
                5 & 0.12 & 0.25 & 0.14 & 0.23 & 0.11 & 0.22 & 0.10 & 0.20 \\
                \hline
            \end{tabular}
        \end{adjustbox}
        \label{tab:sample-arms}
    \end{table*}
\fi

\subsection{Qiskit Experiment}
\label{subsec:qiskit-exp}

The workflow presented in Algorithm \ref{alg:workflow-for-bds} is simulated on IBM's Qiskit. The Python code implementation is available in \citet{bharati2025tqe}. We present numerical results on the achievable copy complexity for entanglement detection in BDS. The experimental setup is given as follows:

\begin{itemize}
    \item \textbf{Simulation Environments:} The workflow is executed across three computational setups: $(i)$ \textbf{AerSimulator} for idealized simulations, $(ii)$ \textbf{FakeBrisbane} backend to simulate noisy quantum environments, and $(iii)$ \textbf{ibm-brisbane} for real quantum processing unit (QPU) computations.
    \item \textbf{Problem Instance}: We consider $K = 5$ states of which $m = 3$ are entangled. The probabilities are suitably generated, and the true corresponding parameters under $\mathcal{E}_1$ and $\mathcal{E}_2$ are, 
    \begin{align*}
        &\boldsymbol{S}_{\mathcal{E}_1} = \left[0.6306, -0.2688, 0.5232, 0.1796, 0.0695\right], \\
        &\boldsymbol{S}_{\mathcal{E}_2} = \left[-0.0749, 0.5963, -0.1735, 0.2801, 0.3768\right].
    \end{align*}
    \item \textbf{MAB routine}: Each state was measured $10^6$ times on backends $(i, ii)$ and $10^5$ times on $(iii)$. Algorithm \ref{alg:workflow-for-bds} was run 20 times on $(i, ii)$ and 5 times on $(iii)$ for $\delta \in (0,1)$. We plot the average number of copies measured until stoppage on the y-axis and $\log(1/\delta)$ on the x-axis, as shown in Fig.\ref{fig:exp1-MAB}. Here, we note that the large standard deviation for the trend in backend $(iii)$ arises due to the limited number of experiment iterations, constrained by available compute resources.
    \item \textbf{Benchmarking}: We benchmark our approach against non-adaptive BDS tomography, as described in Appendix~\ref{sec:non-adapt-tomo-bds} and compare the resulting copy complexity across the same range of $\delta$. We present a criterion in \eqref{eq:status-preserv} that guarantees that the reconstructed states are not misclassified (see Appendix~\ref{sec:non-adapt-tomo-bds}). For the $K=5$ BDS instance considered in this experiment, the bottleneck is determined by State 5 ($\max_i p_i = 0.446$), implying $\epsilon < 0.054$ to avoid misclassification theoretically. Choosing $\epsilon = 0.05$ allows only a negligible margin for statistical fluctuations, resulting in low classification accuracy. Thus, to ensure a valid comparison that matches the high reliability of the MAB approach, we choose $\epsilon = 0.01$ as the functional baseline.
    \begin{figure}[t]
        \centering   \includegraphics[width=0.55\linewidth]{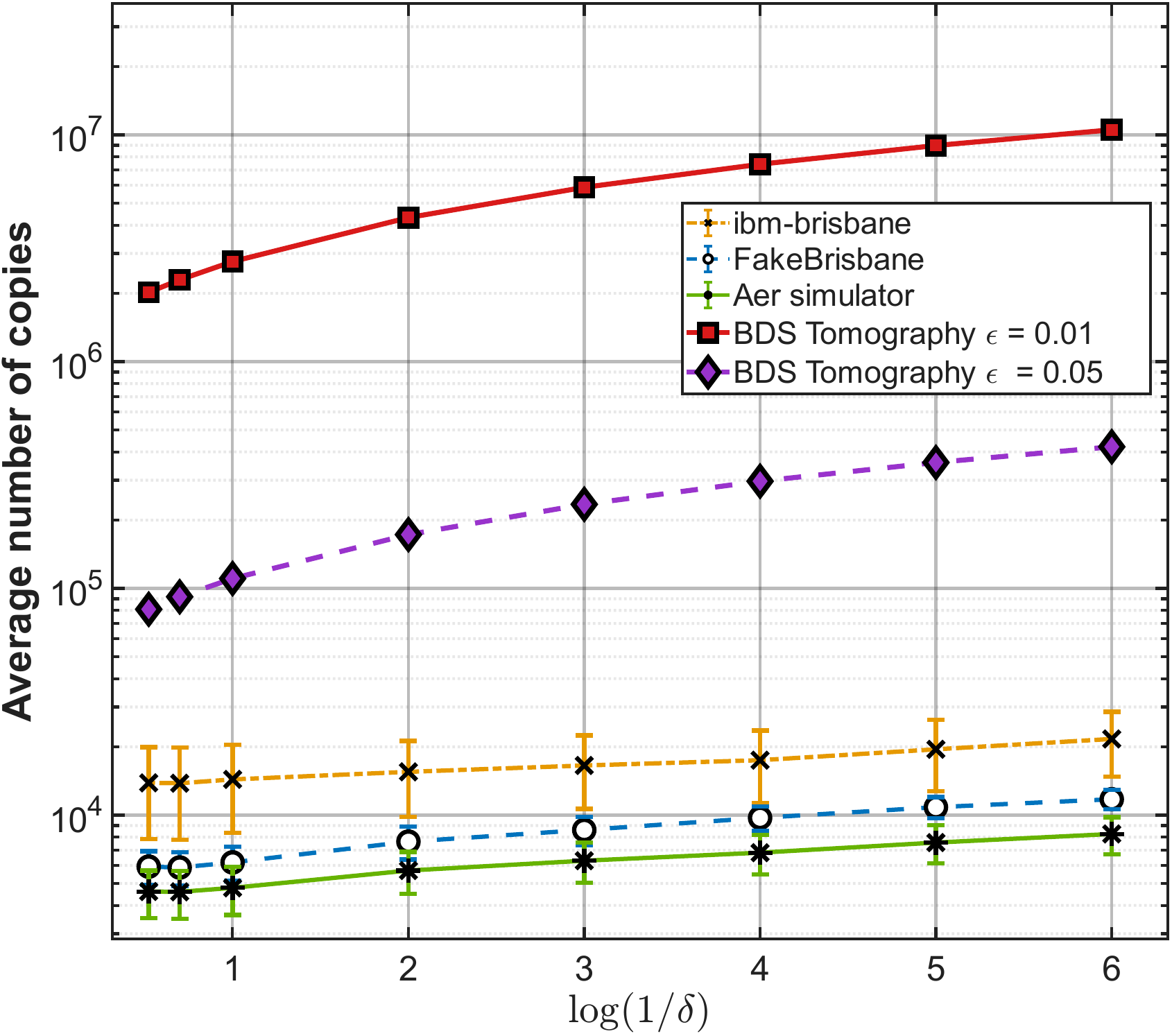}
        \caption{Copy complexity for entanglement detection in BDS v/s $\log(1/\delta)$}
        \label{fig:exp1-MAB}
    \end{figure}
    %The largest mixing probabilities for the instance of $K=5$ BDS considered in this experiment are $0.568, 0.669, 0.611, 0.356$ and $0.446$, respectively. Hence, the bottleneck is determined by State 5, necessitating $\varepsilon < 0.054$. In Fig.~\ref{fig:exp1-MAB}, we present the copy complexity bound with $\varepsilon = \{0.01, 0.05\}$. However, the accuracy achieved setting $\varpepsilon = 0.05$ in BDS tomography is lesser the
\end{itemize}

From Corollary \ref{cor:sampl-compl-lilhdoc}, we observe that the factor $\log(1/\delta)$ has a multiplicative effect on the sample complexity, while the average copy complexity is primarily determined by $\Delta_{\text{min}}$. The values of $S_{\mathcal{E}}$ are governed by the four frequencies $f_1$, $f_2$, $f_3$, and $f_4$, as defined in \eqref{eq:quad-WBM}. While the true values of the $f_i$'s are calculated using $\Tr{\rho_{\scaleto{\text{BDS}}{3pt}}E_i}$, the values of $f_i$ obtained from register counts based on simulations performed on different backends differ from the true values upto $\mathcal{O}(10^{-2})$. Due to measurement noise and decoherence, the goalpost for $S_{\mathcal{E}}$ varies across different backends, and these differences influence $\Delta_{\text{min}}$. 
%To reduce the effects of measurement noise, error mitigation is necessary, with further details provided in Section \ref{subsec:err-mitig}.
One option is to mitigate the measurement noise~(see details in the appendix, Sec.~\ref{subsec:err-mitig}). 

\section{Entanglement Detection in Arbitrary Quantum States}
\label{sec:random-ent-det}

This section outlines a routine for detecting entanglement in arbitrary two-qubit quantum states. Specifically, we consider $K$ arbitrary states, one of which is entangled, and describe an MAB routine along with numerical results. 

\subsection{Numerical Experiment}
\label{subsec:num-exp}

\begin{algorithm}[htbp]
    \caption{Entanglement detection for arbitrary states}
    \label{alg:mod-HDoC-arbit}
    \begin{algorithmic}
        \renewcommand{\algorithmicrequire}{\textbf{Input:}}
        \renewcommand{\algorithmicensure}{\textbf{Output:}}
        \REQUIRE $\zeta = 0$, $\delta$, $\mathcal{A} \gets \{ \rho_1, \rho_2 \ldots \rho_K\}$,  WBM Order $P$
        \ENSURE $A_{\text{ent}}$
        \STATE flag $\gets 1$, $I \gets 1$
        \WHILE{flag}
            \STATE With $\mathcal{E} \gets \mathcal{E}_{P(I)}$, run Algorithm \ref{alg:lilhdoc} for $K$ arms 
            \STATE Return entangled arm $A_{\text{ent}}^{(I)}$
            \IF {$|A_{\text{ent}}^{(I)}| = 1$}
            \STATE flag $\gets 0$
            \ELSE
            \STATE $I \gets I + 1$
            %\ELSIF{$|A_{\text{ent}, 1}| < K$}
            %\STATE With $\mathcal{E} \gets \mathcal{E}_2$, run Algorithm \ref{alg:lilhdoc} for $K - |A_{\text{ent}, 1}|$ arms with $U(t,\delta) \gets \sqrt{\frac{\log(\frac{4Kt^2}{\delta})}{2t}}$ and return stopping time $\tau_2$ and entangled arms $A_{\text{ent}, 2}$ 
            %\ELSE
            %\STATE $A_{\text{ent}, 2} \gets \emptyset$
            \ENDIF
        \ENDWHILE
        \STATE $A_{\text{ent}} \gets A_{\text{ent}}^{(I)}$
    \end{algorithmic}
\end{algorithm}

The workflow outlined in Algorithm \ref{alg:mod-HDoC-arbit} is implemented in \textsc{MATLAB}. The algorithm takes the following inputs: a threshold $\zeta$, an error parameter $\delta$, a set of $K$ states $\mathcal{A}$ (with the promise that one state is entangled), and a permutation of $ \{1,2,3,4,5,6 \}$ that specifies the order in which the WBMs should be adapted. As this is a promise problem, the algorithm terminates as soon as it identifies an entangled state, without needing to measure with all six WBMs. The different modules in the software code are described below:
\begin{itemize}
    \item \textbf{Generating arbitrary quantum states}: To generate arbitrary density matrices, we follow the method described in \citet{Zyczkowski_2001}. Specifically, we start by generating a complex matrix $A \in \mathbb{C}^{4 \times 4}$, where the real and imaginary parts of each element are independently sampled from a normal distribution. We then compute the density matrix $\rho$ by normalizing $A A^\dagger$, resulting in $\rho = A A^\dagger/\text{Tr}(A A^\dagger)$. This procedure ensures that $\rho$ is a valid density matrix. We encountered pure states with $S_\mathcal{E}(\cdot) = 0$. For such cases, the algorithm took a significantly long time to converge and, despite this, incorrectly estimated the value of $S_\mathcal{E}(\rho)$. Consequently, we adjusted the threshold to $\zeta = -1 \times 10^{-3}$ and imposed a cutoff on the sample complexity at $1 \times 10^{12}$ to better reflect the real-time performance of this policy.

    \item \textbf{Experiment Setup}: In this experiment, we generate 1000 distinct instances of $K=5$ full rank arbitrary states, ensuring that each instance contains exactly one entangled state. These instances are validated using the PPT criterion. 

\end{itemize}

 \paragraph{Detection ratio versus $\delta$:} We test the efficacy of using the single-parameter family of witnesses \eqref{eq:witness-family} to detect entanglement in arbitrary states. For $\delta \in (0,1)$, we report the \textit{detection ratio}, which is defined as the fraction of times the entangled state is correctly identified by the MAB policy. This result is shown in Fig. \ref{fig:arbit-perform-plot}. We observe that the detection ratio diminishes with larger error margins.
        \begin{figure}[h]
          \centering
          \includegraphics[width=0.5\linewidth]{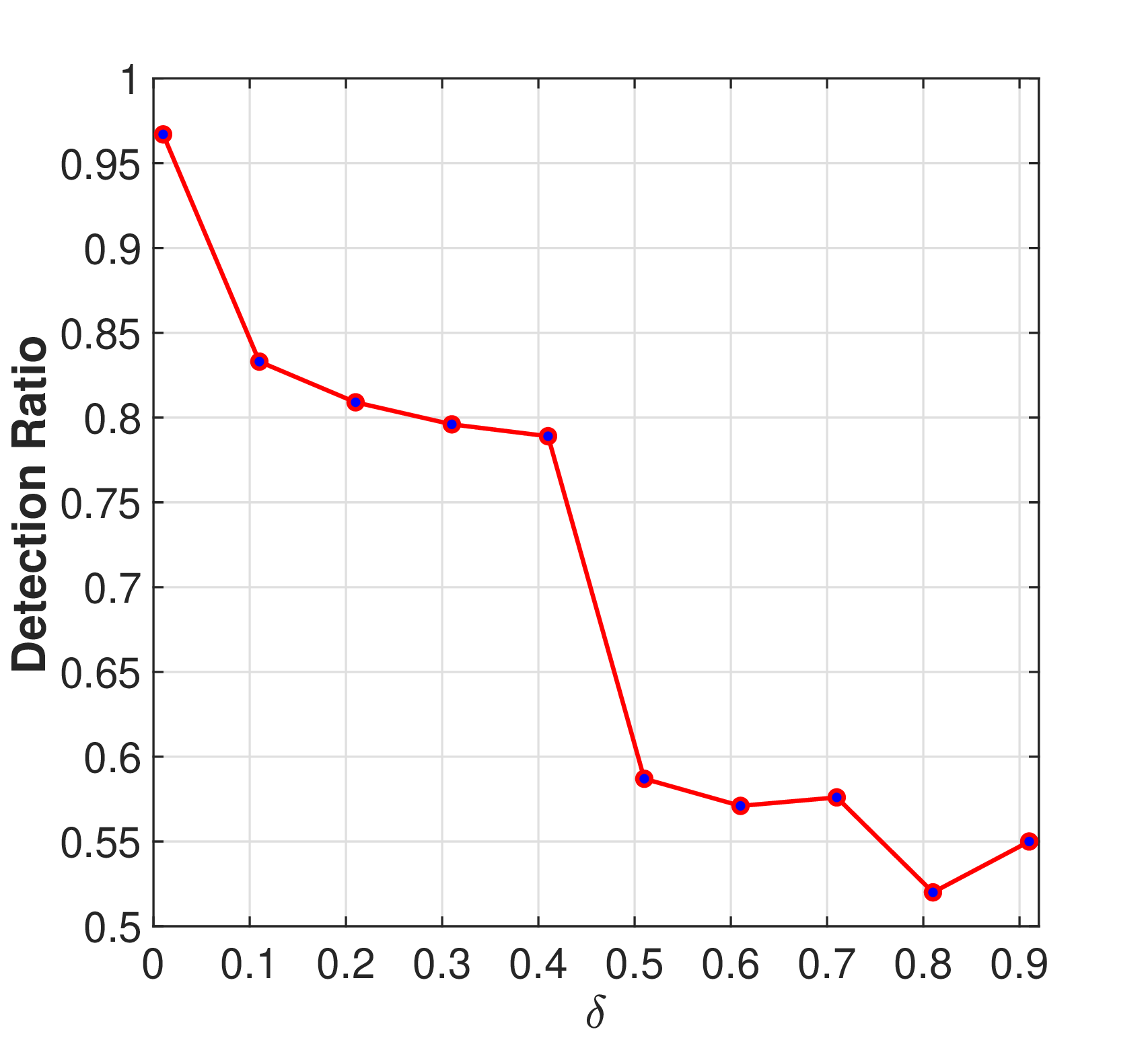}
            \caption{Entanglement Detection ratio v/s $\delta$ for arbitrary quantum states}
            \label{fig:arbit-perform-plot}
        \end{figure}

\paragraph{WBM usage under arbitrary ordering vs. $\delta$:} For a random WBM ordering, we compute how many WBMs are required to detect a single valid entangled state among a set of $K$ states. For $\delta \in (0,1)$, we present the frequency distribution of the number of WBMs used, displayed as a cumulative histogram in Fig.~\ref{fig:arbit-histogram}. 
        \begin{figure}[h]
          \centering
            \includegraphics[width=0.5\linewidth]{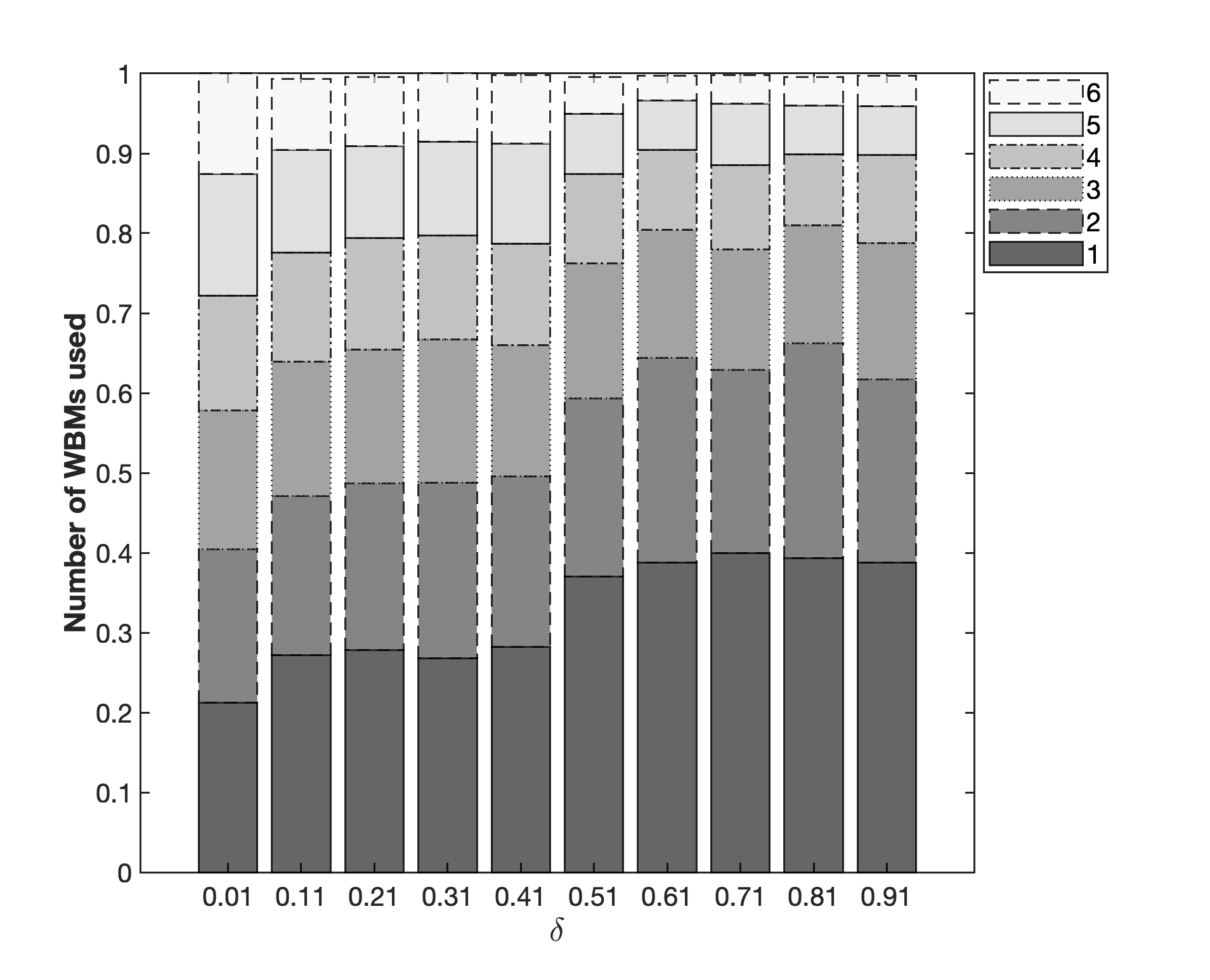}
            \caption{The cumulative histograms compare the number of WBMs used to detect one valid entangled state across different values of $\delta$.}
            \label{fig:arbit-histogram}
        \end{figure}
    For significantly larger values of $\delta$, the lower detection ratios indicate that the algorithm terminates upon identifying the wrong state, preventing further adaptation and primarily (around $85\%$) relying on up to three witnesses. This experiment can be extended to the scenario where there are $m$ entangled states. However, since $m$ is unknown and the states may be detectable under any of the WBMs, the routine would necessitate measuring under all WBMs to reliably identify the entangled states.

\subsection{Numerical Outliers}
We present an example of an entangled state certified by the PPT test that yields positive values for $S_\mathcal{E}(\rho)$ under all six WBMs. Consider the pure entangled states and their corresponding $\boldsymbol{S}_{\mathcal{E}}$ values, as shown in Table \ref{tab:pure-ent-states}. The state $\rho = \sum_{i=1}^{3} p_i \ket{\psi_i}\bra{\psi_i}$, where $\ket{\psi_i}$ are defined in Table \ref{tab:pure-ent-states}, and $(p_i)_{i=1}^{3} = (0.2936, 0.0655, 0.6409)$, has a negative eigenvalue of $-0.029$ after applying the partial transpose, thus confirming it as an entangled state. However, the values of $(S_{\mathcal{E}}) = (0.0732, 0.1727, 0.1257, 0.1139, 0.0736, 0.0296)$ under the six WBMs are all non-negative. This indicates that the state cannot be detected by the witness family described in \eqref{eq:witness-family}.

\begin{table*}[htbp]
    \centering
    \caption{Examples of arbitrary pure entangled states detected by the family of witnesses \eqref{eq:witness-family}}
    \begin{adjustbox}{max width=\textwidth}
        \begin{tabular}{|c|c|}
            \toprule
            Pure entangled states $\ket{\psi_1}, \ket{\psi_2}$ and $\ket{\psi_3}$ & Values under $(S_{\mathcal{E}_i})_{i=1}^{6}$ \\
            \midrule   
            $[0.2687 + 0.0375i; 0.2406 + 0.4090i; 0.0502 + 0.6162i; 0.2413 + 0.5107i]$ & $(-0.1851, 0.3160, 0.1598, -0.0058, 0.2177, -0.1947)$ \\
            $[0.0565 + 0.3355i; 0.0508 + 0.0686i; 0.4885 + 0.5191i; 0.5689 + 0.2125i]$ & $(0.1562, -0.0280, -0.1135, 0.1832, -0.0779, 0.1373)$ \\ $[0.1953 + 0.4438i; 0.4958 + 0.4009i; 0.0069 + 0.3495i; 0.0322 + 0.4848i]$ & $(-0.1851, 0.3160, 0.1598, -0.0058, 0.2177, -0.1947)$ \\ 
            \bottomrule
        \end{tabular}
        \end{adjustbox}
    \label{tab:pure-ent-states}
\end{table*}

We derive an observation on the nature of such states, focusing specifically on the eigenstate $\ket{\lambda}_{\text{max}} = [0.3773 - 0.1445i, 0.4768 - 0.3244i, 0.4598 + 0.0809i, 0.5351]$, which corresponds to the largest eigenvalue of $\rho$. This eigenstate has a Schmidt coefficient close to, but not exactly equal to 1, suggesting that it lies near the boundary of separable states while still remaining entangled. The pure state $\ket{\lambda}_{\text{max}}\bra{\lambda}_{\text{max}}$ produces the following values for $(S_{\mathcal{E}}) = (0.0380, 0.1269, 0.0401, 0.1054, 0.0221, 0.0074)$. This highlights that both pure and mixed entangled states can yield inconclusive results when measured using this specific witness family. In these cases, it is crucial to measure all six witnesses a sufficient number of times to accurately obtain the expected values of the corresponding observables. Additionally, performing FST can help determine the entanglement of these states using other separability criteria.

%%%%%%%%%%%%%%%%%%%
\section{Discussions}
\label{sec:discussion}

\subsection{The MAB Advantage}

While traditional FST and fixed witness testing methods \citep{mintomography} are direct and computationally feasible for bipartite qubit states, the proposed MAB approach provides a practical and theoretical advantage for entanglement detection in parameterised two-qubit states considered in this work.  

\paragraph{Fixed-confidence guarantees without state reconstruction:} $\delta$-correct MAB policies considered in this work output a statistically certified set of entangled parameterised two-qubit states, obtained directly from measurement data and without explicit state reconstruction. In contrast, FST necessitates state reconstruction up to a specified trace accuracy $\epsilon$ and does not provide any explicit confidence guarantee on the entanglement aspect. Thus, any guarantee derived from FST is therefore indirect and depends on post-hoc analysis of the reconstructed state using analytical separability tests, whereas the MAB approach delivers decision-level guarantees by design. 

\paragraph{Adaptive allocation of measurements:} The key advantage of the MAB formulation is its \textit{adaptivity} in measuring the witness basis on states based on their separability gaps $\{\Delta_i\}$, i.e., prioritising states whose gaps are smaller or whose entanglement status is uncertain. This yields instance-dependent copy complexity guarantees that scale polynomially in $\Delta_i$ and have polylogarithmic dependence on $K$ and $(1/\delta)$. Adaptivity also allows early stopping when the criteria for entanglement certification are met. In contrast, \textit{non-adaptive} methods like FST and repeated witness-testing approaches~\citep{mintomography} tend to systematically oversample with no principled stopping rule; that is, states far from the threshold (large $\Delta_i$) are measured as extensively as the ones close to it (small $\Delta_i$) till the desired trace accuracy is achieved. This results in the copy complexity scaling linearly in $K$; sample-optimal FST with collective measurements requires $\mathcal{O}(16K/\epsilon^2)$ copies~\cite{Haah_2017}.

%Second, performing multiple iterations of the witness testing approach \citep{mintomography} amounts to sequential single-state tests using repeated witness measurements and \textbf{without adaptive resource allocation} across the $K$ states with no definite stopping rule. While repetition can reduce statistical noise in the $S_{\mathcal{E}}$ estimates, it suffers from a limitation similar to FST; both approaches treat each state independently and expend unnecessary copies: FST toward exhaustive state reconstruction, and repetitive applications of \citet{mintomography} toward non-adaptive measurement repetition. }

%\textcolor{blue}{Third, sample optimal FST with collective measurements require $\mathcal{O}(16/\epsilon^2)$ copies to achieve trace-distance accuracy of $\epsilon$ ~\cite{Haah_2017}. Since the number of entangled states in a batch is unknown, FST needs to be repeated for all $K$ states, leading to a linearly scaling $\mathcal{O}(16K/\epsilon^2)$. On the contrary, the MAB-based approach estimates $S_{\mathcal{E}}(\cdot)$ with the detection difficulty controlled by $\Delta_i$, incurring $O\left(\sum_{i=1}^K \Delta_i^{-2}\log\!\frac{K\log \Delta_i^{-2}}{\delta}\right)$ copies. We note that, unlike the uniform $\epsilon^{-2}$ scaling in FST, the MAB complexity scaling achieves polynomial dependence on the separability gaps $\{\Delta_i\}$ and only polylogarithmic dependence on $K$ and $(1/\delta)$. This adaptivity in allocating measurements allows the MAB-based strategy to focus resources on the most uncertain states and avoid any redundant sampling. 
%
\paragraph{Practical Advantage:} As illustrated in Fig.~\ref{fig:exp1-MAB}, the copy complexity achieved by the MAB policy is benchmarked against that of the non-adaptive tomography baseline for Bell-diagonal states described in Appendix \ref{sec:non-adapt-tomo-bds}. The MAB routine is run across IBMQ backends (Aer, FakeBrisbane, ibm-brisbane), achieving a copy complexity of $\mathcal{O}(10^4) - \mathcal{O}(10^5)$. In contrast, the Bell Diagonal state tomography approach requires $\mathcal{O}(10^6)-\mathcal{O}(10^7)$ copies to achieve a trace distance accuracy $\epsilon = 0.01$ for the same scale $(K=5)$ and same range of $\delta$. We emphasise that the two-order-of-magnitude reduction in copy complexity is enabled via adaptive sampling and early stopping. 

Overall, the MAB-based approach offers a quantitatively demonstrated reduction in copy complexity for Bell Diagonal states, explicit confidence guarantees, adaptivity in distributing measurement effort and scalability for batch detection tasks for the two-qubit parameterised states.

\subsection{Does the MAB routine optimize WBM ordering?} 

As outlined in \citet{PRLWitnessFam}, there are WBM optimization strategies that prescribe an optimal WBM ordering for efficiently detecting whether a \textit{single} arbitrary two-qubit quantum state is entangled. One such adaptive strategy uses the maximum-likelihood maximum-entropy (MLME) estimate of the unknown state, based on causal measurement data. Using this estimate, the subsequent WBM is identified to be the one that minimises the quadratic separability criterion. This leads to partial estimation of the quantum state. 

In the context of batch entanglement detection, where an unknown number $m$ of entangled states out of a set of $K$ states may be detectable under different witnesses, implementing the WBM adaptive strategies from \citet{PRLWitnessFam} would be both time-consuming and complex. This is because each of the $K$ states may require a unique permutation of the WBM ordering. Furthermore, the goal of the proposed MAB framework is to \textit{minimize} the number of measurements needed for detecting entanglement in a given set of quantum states under a specific WBM. Notably, this framework \textit{does not} optimize the WBM ordering across multiple MAB runs.

The closest comparison is with Fig.~\ref{fig:arbit-histogram}, which depicts the cumulative frequency of WBMs used. This aligns with the cumulative percentage of states identified under the WBM family, as seen in schemes 1A and 4A of the recently reported incomplete state estimation techniques (see \citep[Fig.~1]{PRLWitnessFam}). However, this approach does not address the batch entanglement detection problem. The WBM adaptation scheme A in \citet{PRLWitnessFam} successfully detects $98\%$ of random pure states but only $33\%$ of full-rank mixed states. We specifically analyze the latter category, generating multiple instances of $K$ states to quantify the number of WBMs required to detect a single entangled state, presenting results for varying $\delta$. 
%
%Notably, \citet{PRLWitnessFam} lacks numerical insights into the sample complexity and convergence rate of its proposed schemes. 
In this way, we provide numerical insight into the sample complexity and convergence rate of our proposed schema, in contrast to~\citet{PRLWitnessFam}.

%%%%%%%%%%%%%%%%%%%%%%%%%%%%%%%%%%%%%%%%%%%%%%%
\section{Future Work And Conclusion}
\label{sec:future-and-concl}

Batch entanglement detection, as discussed in this paper, is particularly useful for verifying the integrity of a batch of practically relevant entangled states, before use in applications like secure multi-channel quantum communication. 
We established a novel correspondence between the problem of batch entanglement detection and the Thresholding Bandit problem in stochastic Multi-Armed Bandits. We proposed the $(m,K)$-quantum Multi-Armed Bandit framework for entanglement detection. 
The focus of this framework is on identifying $m$ entangled states out of $K$ states, where $m$ is potentially unknown. 
We apply this framework to two-qubit states using two key ingredients: a specialized set of six measurements for two-qubit states called Witness Basis Measurements (WBM) $\mathcal{E}$ and a separability criterion $S_\mathcal{E}$, which is based on the data obtained from these measurements and serves as the parameter that needs to be estimated.  
We present theoretical guarantees and numerical simulations to demonstrate how this parameter can be estimated quickly and accurately using MAB policies.
First, we show that entangled states belonging to a class of parameterised two-qubit states $\mathcal{F}$ can be detected by measuring a subset of the six WBMs. With the knowledge of the WBM, we show that we can directly apply some suitable MAB policies. 
Second, for the same parameterised states, we present a routine for entanglement detection when the WBM is not known by enabling arbitrary sequential adaptation of the WBMs. We extend this to arbitrary two-qubit quantum states and provide numerical results on the efficacy of using these measurements for detecting entanglement. 

An exciting avenue for future research lies in identifying WBMs for higher-dimensional bipartite systems. The minimalistic tomographic scheme proposed in \citet{mintomography} significantly reduces the number of required witnesses for two-qutrits from 81 to just 11, demonstrating the potential for more efficient entanglement detection. Meanwhile, recent advancements in data-driven machine learning, particularly the use of SVMs to construct linear entanglement witnesses from local measurements \citep{SVMWitness}, open new possibilities for tackling the $(m,K)$-quantum MAB problem. By leveraging these techniques, one could optimize the number of witnesses needed to reliably detect all $m$ states.  

Entanglement detection can be reframed as a membership problem, where a state belongs to a set if it exhibits a specific property—such as entanglement. This perspective aligns with the partition identification problem \citep{juneja2019sample}, in which the objective is to determine the partition to which a data point belongs using a hyperplane structure. Extending this framework to the $(m, K)$-quantum MAB problem could pave the way for groundbreaking approaches to adaptive entanglement detection.

\section*{Acknowledgement}
K.B. sincerely acknowledges the support from the Ministry of Education, Government of India, through the Prime Minister's Research Fellowship (PMRF) Scheme. V.S. is supported by the U.S. Department of Energy, Office of Science, National Quantum Information Science Research Centers, Co-design Center for Quantum Advantage (C2QA) contract (DE- SC0012704). K.J. gratefully acknowledges a grant from Mphasis to the Centre for Quantum Information, Communication, and Computing (CQuICC) at IIT Madras.

\bibliography{newrefs}
\bibliographystyle{tmlr}

\appendix
\section{Appendix}
\label{sec:supp-mat}

\subsection{Non-adaptive baseline: Bell-Diagonal State Tomography}
\label{sec:non-adapt-tomo-bds}

\paragraph{Copy Complexity Bound:} For a prescribed trace-distance accuracy $\epsilon$, we derive a copy-complexity guarantee for non-adaptive Bell-diagonal-state tomography and compare it with adaptive bandit algorithms for entanglement detection. As seen in Table \ref{tab:bell-diag-WBM}, Bell-diagonal states are characterised by three two-qubit Pauli correlations $XX, YY$ and $ZZ$. Thus, no other Pauli settings are required. Given $\rho_{\scaleto{\text{BDS}}{3pt}}$ , it is sufficient to estimate
\begin{equation}
    c_{xx} = \langle XX \rangle_{\rho_{\scaleto{\text{BDS}}{3pt}}}, \quad c_{yy} = \langle YY \rangle_{\rho_{\scaleto{\text{BDS}}{3pt}}}, \quad c_{zz} = \langle ZZ \rangle_{\rho_{\scaleto{\text{BDS}}{3pt}}}.
\end{equation}
Let $\mathbf c \coloneqq (c_{xx}, c_{yy}, c_{zz})^\top$. The eigenvalues of $\rho_{\scaleto{\text{BDS}}{3pt}}$ correspond to the statistical mixture $\mathbf{p} = \{p_i\}_i$ \eqref{eq:bell-diag} and are given by the affine map
\begin{equation}
    \mathbf{p} = \frac{1}{4} \left( \mathbf{1} + \mathbf{A} \mathbf{c} \right),
    \label{eq:affine-map}
\end{equation}
where
    \begin{equation*}
    \mathbf A =
        \begin{pmatrix}
         1 & -1 &  1 \\
         1 &  1 & -1 \\
        -1 & -1 & -1 \\
        -1 &  1 &  1
        \end{pmatrix}.
    \end{equation*}
Each measurement outcome for $s \in \{xx, yy, zz\}$ is a random variable $X_s \in \{-1,1\}$ with mean $c_s$. Let $\hat c_s \coloneqq 1/t \sum_j X_{s,j}$ be the empirical mean obtained from $t$ measurements. By Hoeffding's inequality,
    \begin{equation*}
        \mathbb{P}\big(|\hat c_s - c_s| \ge \eta\big)
        \;\le\;
        2\exp\!\left(-\frac{t\eta^2}{2}\right).
    \end{equation*}
Denoting $\hat{\mathbf{c}} \coloneqq (\hat c_{xx}, \hat c_{yy}, \hat c_{zz})^\top$ and taking a union bound over the three correlators gives,
    \begin{equation}
        \mathbb{P}\big(\|\hat{\mathbf{c}} - \mathbf{c}\|_\infty > \eta\big) \le \sum_s \mathbb{P}\big(|\hat c_s - c_s| > \eta\big)
        \;\le\;
        6\exp\!\left(-\frac{t\eta^2}{2}\right) = \delta.
        \label{eq:union_bound}
    \end{equation}
From \eqref{eq:affine-map}, $\hat{\mathbf p}-\mathbf p=\tfrac14 A(\hat{\mathbf c}-\mathbf c)$, hence
    \begin{equation}
        \|\hat{\mathbf p}-\mathbf p\|_1
        = \tfrac14 \|A(\hat{\mathbf c}-\mathbf c)\|_1
        = \tfrac14\sum_{i=1}^4 \bigl| (A(\hat{\mathbf c}-\mathbf c))_i \bigr|
        \le 3\|\hat{\mathbf c}-\mathbf c\|_\infty.
    \end{equation}
Let $\hat\rho_{\scaleto{\text{BDS}}{3pt}}$ be the estimate of $\rho_{\scaleto{\text{BDS}}{3pt}}$, then the trace distance between $\hat\rho_{\scaleto{\text{BDS}}{3pt}}$ and $\rho_{\scaleto{\text{BDS}}{3pt}}$ is given by,
\begin{equation}
    D(\hat\rho_{\scaleto{\text{BDS}}{3pt}}, \rho_{\scaleto{\text{BDS}}{3pt}}) \; =\; \tfrac12 \| \hat\rho_{\scaleto{\text{BDS}}{3pt}} - \rho_{\scaleto{\text{BDS}}{3pt}} \|_1 \; \overset{(a)}{=} \;\tfrac12\sum_{i=1}^{4} |\hat p_i - p_i | \; = \; \tfrac12 \| \hat{\mathbf{p}} - \mathbf{p}\|_1 \; \le \; \tfrac32 \|\hat{\mathbf{c}} - \mathbf{c}\|_\infty
    \label{eq:helper2}
\end{equation}
where (a) holds because both $\rho_{\scaleto{\text{BDS}}{3pt}}$ and $\hat\rho_{\scaleto{\text{BDS}}{3pt}}$ are diagonal in the Bell basis, so the eigenvalues of $\hat\rho_{\mathrm{BDS}}-\rho_{\mathrm{BDS}}$ are $\{\hat p_i-p_i\}_{i=1}^4$. To guarantee a trace distance of $\epsilon$ between $\hat\rho_{\scaleto{\text{BDS}}{3pt}}$ and $\rho_{\scaleto{\text{BDS}}{3pt}}$, it suffices to enforce $\|\hat{\mathbf{c}} - \mathbf{c}\|_\infty \le \tfrac23 \epsilon$. Setting $\eta = \tfrac23 \epsilon$, we solve for $t$ in \eqref{eq:union_bound},
    \begin{equation}
        t \;\ge\; \frac{9}{2\epsilon^2}\log\!\left(\frac{6}{\delta}\right)
    \end{equation}
shots per setting. Since three measurement settings are used, the total number of copies required is,
    \begin{equation}
        N_{\text{BDS}}(\epsilon,\delta)
        \;=\;
        3t
        \;\ge\;
        \frac{27}{2\epsilon^2}\log\!\left(\frac{6}{\delta}\right).
    \end{equation}

\paragraph{Choice of $\epsilon$ for preserving the true status of reconstructed Bell-Diagonal States:} From Section \ref{subsec:bell-diag}, we know that a Bell diagonal state is entangled if $p_{i^\star} > \tfrac12$, where $i^\star = \arg\max_i p_i$ and separable if $p_{i^\star} < \tfrac12$. Since $\mathbf{p}$ and $\hat{\mathbf{p}}$ both form valid probability distributions, 
    \begin{equation}
        \sum_{i=1}^4 (\hat p_i - p_i) = 0.
        \label{eq:helper1}
    \end{equation}
Preserving the status of $\hat{\rho}_{\scaleto{\text{BDS}}{3pt}}$ amounts to preserving the inequality defining the status, i.e., entangled means $\hat p_{i^\star} > \tfrac12$ and separable means $\hat p_{i^\star} < \tfrac12$. Let $x \coloneqq \bigl|\hat{p}_{i^\star}- p_{i^\star}\bigr|$ denote the magnitude of the estimation error on the largest eigenvalue. We consider the worst case in which $\hat p_{i^\star}$ changes by $x$ in the most detrimental direction to status preservation. If $\hat p_{i^\star}$ decreases by $x$, then
    \begin{equation*}
        \hat{p}_{i^\star} - p_{i^\star} = -x.
    \end{equation*}
For \eqref{eq:helper1} to hold, we must have
    \begin{equation*}
        \sum_{i\neq i^\star} (\hat p_i-p_i)=x .
    \end{equation*}
By the triangle inequality,
    \begin{equation*}
        \sum_{i\neq i^\star} |\hat p_i- p_i| \;\ge\; \left|\sum_{i\neq i^\star} (\hat p_i- p_i)\right| = x.
    \end{equation*}
Substituting in \eqref{eq:helper2},
    \begin{equation*}
    D(\hat\rho_{\scaleto{\text{BDS}}{3pt}}, \rho_{\scaleto{\text{BDS}}{3pt}}) = \frac{1}{2}\left(|\hat p_{i^\star}- p_{i^\star}|+\sum_{i\neq i^\star} |\hat p_i-\hat p_i| \right) \ge x
    \end{equation*}
Imposing the constraint $D(\hat\rho_{\scaleto{\text{BDS}}{3pt}}, \rho_{\scaleto{\text{BDS}}{3pt}})
\le \epsilon$ therefore yields
    \begin{equation}
        |\hat p_{i^\star}-p_{i^\star}| \le \epsilon .
        \label{eq:eig-bound}
    \end{equation}
We now present the condition for status preservation. If $\rho_{\scaleto{\text{BDS}}{3pt}}$ is entangled, then
$p_{i^\star} > \tfrac12$. The worst-case decrease according to \eqref{eq:eig-bound} implies $\hat p_{i^\star} \ge p_{i^\star} - \epsilon$. Thus, entanglement is preserved if,
    \begin{equation}
        \epsilon < p_{i^\star} - \tfrac12.
        \label{eq:ent-preserv}
    \end{equation}
Similarly, if $\rho_{\scaleto{\text{BDS}}{3pt}}$ is separable, then $p_{i^\star} \le \tfrac12$, and the worst-case increase is $\hat p_{i^\star} \le p_{i^\star} + \epsilon$. Hence, separability is preserved if
    \begin{equation}
        \epsilon < \tfrac12 -  p_{i^\star} .
        \label{eq:sep-preserv}
    \end{equation}
Combining~\eqref{eq:ent-preserv} and~\eqref{eq:sep-preserv}, the unified status-preservation condition is,
    \begin{equation}
        \epsilon < \bigl|p_{i^\star}-\tfrac12\bigr| .
        \label{eq:status-preserv}
    \end{equation}

\subsection{Proofs for Section \ref{SUBSEC:MOD-SUCCELIM}}
\label{subsec:proofs-of-SEA}

\noindent The following lemma is useful for some calculations.
\begin{lemma}
For $t\ge 1, c > 0, \varepsilon \in (0,1), 0 < w \le 1$, 
\begin{equation}
    \frac{1}{t} \log \left( \frac{\log \left( (1+\varepsilon) t\right) }{w} \right) \ge c \implies t \le \frac{1}{c} \log \left( \frac{2\log \left( \frac{(1+\varepsilon)}{cw}\right) }{w} \right).
    \label{eq:useful-ineq}
\end{equation}
\label{lemma:useful-LIL-theorem}
\end{lemma}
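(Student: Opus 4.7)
The plan is to exponentiate the hypothesis, extract a crude polynomial upper bound on $t$ via two elementary inequalities, and then feed that bound back through the logarithmic form of the hypothesis to recover the claimed doubly logarithmic estimate.

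First, I would rewrite the hypothesis by exponentiating twice: $\frac{1}{t}\log(\log((1+\varepsilon)t)/w) \ge c$ is equivalent to $we^{ct} \le \log((1+\varepsilon)t)$. To this I would apply two standard inequalities: $\log y \le y/e$ for $y > 0$ (applied to the right-hand side with $y = (1+\varepsilon)t$) and $e^x \ge x^2/2$ for $x \ge 0$ (applied to the left-hand side with $x = ct$). Sandwiching, $w(ct)^2/2 \le we^{ct} \le (1+\varepsilon)t/e$, which simplifies to the polynomial bound $t \le 2(1+\varepsilon)/(c^2 w e)$. Using $2w \le e$ (valid since $w \le 1$), this loosens to $t \le (1+\varepsilon)/(c^2 w^2)$.

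Next, I would bootstrap: the bound $t \le (1+\varepsilon)/(c^2 w^2)$ gives $(1+\varepsilon)t \le ((1+\varepsilon)/(cw))^2$, hence $\log((1+\varepsilon)t) \le 2\log((1+\varepsilon)/(cw))$. Substituting into the exponentiated hypothesis yields $we^{ct} \le 2\log((1+\varepsilon)/(cw))$; dividing by $w$ and taking the log produces $ct \le \log(2\log((1+\varepsilon)/(cw))/w)$, which is the claim.

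The main obstacle I anticipate is spotting the bootstrapping move. The crude polynomial bound $t \le (1+\varepsilon)/(c^2 w^2)$ looks much weaker than the target doubly-logarithmic bound, so one might be tempted to iterate the inequalities further; but it is in fact precisely what one needs to collapse one layer of logarithm on the right-hand side of the exponentiated hypothesis. A minor subtlety to check is that the outer logarithm $\log(2\log((1+\varepsilon)/(cw))/w)$ in the stated bound is well-defined, i.e., $\log((1+\varepsilon)/(cw)) > 0$. This follows from the hypothesis itself: since $we^{ct} \ge w > 0$ forces $\log((1+\varepsilon)t) \ge w$, combining with the derived polynomial upper bound on $t$ gives $((1+\varepsilon)/(cw))^2 \ge e^w > 1$, so $(1+\varepsilon)/(cw) > 1$ automatically whenever the hypothesis is satisfiable.
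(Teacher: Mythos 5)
Your proof is correct. The key equivalence $we^{ct}\le\log\left((1+\varepsilon)t\right)$, the crude polynomial bound $t\le 2(1+\varepsilon)/(c^2we)\le(1+\varepsilon)/(c^2w^2)$ obtained from $\log y\le y/e$ and $e^x\ge x^2/2$, and the bootstrap substitution $\log\left((1+\varepsilon)t\right)\le 2\log\left(\tfrac{1+\varepsilon}{cw}\right)$ all check out, as does your positivity argument showing the outer logarithm in the claimed bound is well-defined. Note that the paper itself states this lemma without proof (it is the standard inversion step used in the lil'UCB analysis of Jamieson et al.), so there is no in-paper argument to compare against; your two-stage ``crude bound then bootstrap'' derivation is the standard route for such inversions and supplies a complete, self-contained justification of exactly the constant ($2$ inside the outer logarithm) that the paper asserts.
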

\iffalse
\begin{proof}
Let us consider the inequality on the left-hand side of Eq.~\ref{eq:useful-ineq}. We will use the fact that $\log(\log(x))<\log(x-1) < \log(x)$ holds for all $x > 1$.
    \begin{align}
        ct &\le \log \left( \log \left( (1 + \varepsilon) t \right)^{\frac{1}{w}} \right) \nonumber \\
        ct &\le \log  \left( (1 + \varepsilon) t \right)^{\frac{1}{w}} \nonumber \\ 
        \exp(ct) &\le \left( (1 + \varepsilon) t \right)^{\frac{1}{w}} \nonumber \\
        %\exp(wct) &\le (1 + \varepsilon) t \nonumber \\
        -wct \exp(-wct) &\le \frac{-cw}{1 + \varepsilon} \label{eq:lambertW-LIL}
    \end{align}
For values of $0 > -wct \ge \frac{-1}{e}$, we use the Lambert-$W$ function to solve Eq.~\ref{eq:lambertW-LIL} as $-wct = W_{-1} \left(\frac{-cw}{1 + \varepsilon} \right)$. From above, we have that, $ct \le \frac{1}{ew}$. This implies that for arbitrarily small values of $w$, $ct$ is much larger than 1. Thus, when $ct >> 1$, we use the approximation that $\log(ct) \approx ct$. Using the inequality that $W_{-1}(y) > \frac{e}{e-1} \log(-y)$, we get that,
\begin{align}
    wct &< \frac{e}{e-1} \log(\frac{1 + \varepsilon}{cw}) \nonumber \\
    %ct &< \frac{2 \log\left((1 + \varepsilon)/cw\right)}{w} \nonumber \\
    \log (ct) &< \log \left(\frac{2 \log\left((1 + \varepsilon)/cw\right)}{w} \right) \nonumber \\
    t &\le \frac{1}{c}\log \left(\frac{2 \log\left((1 + \varepsilon)/cw\right)}{w} \right) \nonumber 
\end{align}
\end{proof}
\fi

\subsubsection{Proof of Lemma \ref{lemma:delta-PAC-SUCCELIM}}
\label{subsubsec:proof-delta-PAC-SUCCELIM}

\begin{proof}
 Let $\mathcal{B}$ denote the "good" event that at any time $t > 0$ and for all arms $i \in [K]$, the true value $S_\mathcal{E}(\rho_i)$ is well concentrated around its estimate $\hat{S}_{i,N_i(t)} =  1/N_i(t)\sum_{s=1}^{t} J_{i,s}$, where i.i.d. samples $J_{i,s} \coloneqq 4\mathbf{1}_{\{Y = 1\}}\mathbf{1}_{\{Y' = 2\}} - \left( \mathbf{1}_{\{Y = 3\}} - \mathbf{1}_{\{Y = 4\}}\right)\left( \mathbf{1}_{\{Y' = 3\}} - \mathbf{1}_{\{Y' = 4\}}\right)$ for i.i.d. outcomes $Y, Y'$ from measuring $\mathcal{E}$ on $\rho_i$. 
    \begin{equation*}
        \mathcal{B} \coloneqq \bigcup_{i=1}^{K} \bigcup_{t=1}^{\infty} \left\{ |\hat{S}_{i,N_i(t)} - S_i | \le U\left(N_i(t),\frac{\delta}{c_\varepsilon K}\right) \right\}
    \end{equation*}
    From Lemma \ref{lemma:FLIL} and by applying the union bound,  we get that
    \begin{equation}
        \mathbb{P} \left[ \mathcal{B} \right] \ge 1 - c_\varepsilon K \left(\frac{\delta}{c_\varepsilon K}\right)^{1 + \varepsilon} \ge 1 - \delta  
        \label{eq:delta-PAC-MSEA}
     \end{equation}
     where Eq.~\ref{eq:delta-PAC-MSEA} holds because $\varepsilon \in (0,1)$ and $c_\varepsilon \ge 1$.
\end{proof}

\subsubsection{Proof of Theorem \ref{thm:active-set-SEA}}
\label{subsubsec:proof-active-set-SEA}

\begin{proof}
    Recall that the threshold $\zeta = 0$ and problem instance $\boldsymbol{S}_\mathcal{E}$ is such that $S_\mathcal{E}(\rho_1) \ge S_\mathcal{E}(\rho_2) \ge S_\mathcal{E}(\rho_3) \ldots > S_\mathcal{E}(\rho_{K-1}) > 0 > S_\mathcal{E}(\rho_K)$. Let us consider the case that the event $\mathcal{B}$ described in Lemma \ref{lemma:delta-PAC-SUCCELIM} holds. As outlined in Algorithm \ref{alg:mod-SEA}, the arm $i^\star$ will be dropped from the active set $\Omega$ if $\text{LCB}_{i^\star}(t) > 0$. That is,
    \begin{align*}
        &\hat{S}_{i^\star,N_{i^\star}(t)} - U\left(N_{i^\star}(t), \frac{\delta}{c_\varepsilon K}\right) > 0 \\
        &\hat{S}_{i^\star,N_{i^\star}(t)} - |\hat{S}_{i^\star,N_i^\star(t)} - S_{i^\star} | > 0 \\
        &\implies S_{i^\star} > 0
    \end{align*}
     This contradicts the assumption about the problem instance $\boldsymbol{S}$ because $S_{i^\star} = S_\mathcal{E}(\rho_K) < 0$ and so, the arm $i^\star$ will not be dropped from the active set $\Omega$ as long as event $\mathcal{B}$ holds.
\end{proof}

\subsubsection{Proof of Theorem \ref{thm:sampl-compl-SEA}}
\label{subsubsec:proof-sampl-compl-SEA}

\begin{proof}
    Let us consider the case where $\mathcal{B}$ holds. By the elimination rule of Algorithm \ref{alg:mod-SEA}, an arm $i$ is removed from the active set $\Omega$ if $\text{LCB}_i(t) > 0$. We have that,
    \begin{align}
        &\hat{S}_{i,N_i(t)} - U\left(N_i(t), \frac{\delta}{c_\varepsilon K}\right) \ge \zeta \nonumber \\
        &\hat{S}_{i,N_i(t)} - S_i + \Delta_i 
        \ge U\left(N_i(t), \frac{\delta}{c_\varepsilon K}\right) \nonumber \\
        &\implies \Delta_i \ge 2U\left(N_i(t), \frac{\delta}{c_\varepsilon K}\right)
    \end{align}
    Let us denote $N_i$ to be the number of samples of arm $i$, that is, $N_i = \inf \{t : U\left(N_i(t), \frac{\delta}{c_\varepsilon K}\right) \le \frac{\Delta_i}{2}\}$. The minimum value of $N_i$ can be obtained by solving,
    \begin{align}
        & U\left(N_i, \frac{\delta}{c_\varepsilon K}\right)
        = \frac{\Delta_i}{2} \nonumber \\
        &(1+ \sqrt{\varepsilon }) \sqrt{\frac{2(1+\varepsilon )}{N_i} \log\left( \frac{\log\left((1+\varepsilon )N_i\right)}{\delta/c_\varepsilon K}\right)} 
        = \frac{\Delta_i}{2} \nonumber \\
        & \frac{1}{N_i} \log\left( \frac{\log\left((1+\varepsilon )N_i\right)}{\delta/c_\varepsilon K}\right) 
        = \frac{\Delta_i^2}{8(1+\varepsilon ) (1+ \sqrt{\varepsilon })^2}
        \label{eq:min-samples-Ni_pt1}
    \end{align}
    From Lemma \ref{lemma:useful-LIL-theorem}, we get that,
    \begin{equation}
        N_i = \frac{8(1+\varepsilon ) (1+ \sqrt{\varepsilon })^2}{\Delta_i^2} \log \left( \frac{ 2 c_\varepsilon K \log \left( \frac{8 c_\varepsilon (1 + \varepsilon)^2 (1 + \sqrt{\varepsilon})^2 }{\delta} \frac{K}{\Delta_i^2}\right)}{\delta}\right)
    \end{equation}
    Thus, the total number of samples required to identify the arm $i^\star$ with a probability of at least $1-\delta$ is $N \le \sum_{i=1}^{K} N_i$. 
\end{proof}

\subsection{Proofs for Section \ref{SUBSEC:LIL-HDOC-ALGORITHM}}
\label{subsec:proofs-of-lilhdoc}

\subsubsection{Proof of Lemma \ref{lemma:delta-pac-lilhdoc}}
\label{subsubsec:proof-delta-PAC-lilhdoc}

\begin{proof}
    Firstly, we show that Algorithm \ref{alg:lilhdoc} is $(\lambda,\delta)$-correct for arbitrary $\lambda \in [K]$. In the case where there are arms greater than or equal to $\lambda$, we show that $\mathbb{P} \left[  \{  \hat{m} < \lambda \} \cup \bigcup_{ i \in \mathcal{A}_{\text{ent}}} \{ S_i < \zeta\} \right]  \le \delta$ where $\hat{m}$ is the number of good arms identified by the agent. Since we are now considering the case when $m \ge \lambda$, the event $\{\hat{m} < \lambda\}$ implies that at least one good arm $j \in [m]$ is identified as a bad arm by the agent. That is, for some $j \in [m]$ and $t \in \mathbb{N}$, the upper confidence bound $\hat{S}_{j, N_j(t)} + U\left(N_{j}(t),\frac{\delta}{c_\varepsilon K}\right) < \zeta$. Thus, we have that,
    \begin{align}
        \mathbb{P} \left[ \hat{m} < \lambda\right]
        &\le \sum_{j \in [m]} \mathbb{P} \left[ \bigcup_{t \in \mathbb{N}} \{ \hat{S}_{j, N_j(t)} + U\left(N_{j}(t),\frac{\delta}{c_\varepsilon K}\right) < \zeta \}\right] \nonumber \\
        &\le \sum_{j \in [m]} c_\varepsilon \left(\frac{\delta}{c_\varepsilon K}\right)^{1+\varepsilon} \ \ \ \ \ \text{(By Lemma \ref{lemma:FLIL})} \nonumber \\
        &\le m c_\varepsilon \left(\frac{\delta}{c_\varepsilon K}\right) \label{eq:part1}
    \end{align}
    The event $ \bigcup_{ i \in \{\hat{X}_1, \hat{X}_2, \ldots \hat{X}_{\lambda} \} } \{ \mu_i < \zeta \}$ considers all those outcomes where a bad arm is identified to be a good one. Thus, for some bad arm $j \in \{\hat{X}_1, \hat{X}_2, \ldots \hat{X}_{\hat{m}} \}$ such that $j \in [K] \setminus [m]$, we have,

    \begin{flalign}
        &\mathbb{P} \left[ \bigcup_{ i \in \{\hat{X}_1, \hat{X}_2, \ldots \hat{X}_{\lambda} \} } \{ S_i < \zeta \} \right]  \nonumber \\
	    \hspace{4mm} &\le \sum_{j \in [K] \setminus [m]} \mathbb{P} \left[ \bigcup_{t \in \mathbb{N}} \{\hat{S}_{j, N_j(t)} - U\left(N_{j}(t),\frac{\delta}{c_\varepsilon K}\right) > \zeta \} \right] \nonumber \\
	    \hspace{4mm} &\le (K-m)c_\varepsilon \left(\frac{\delta}{c_\varepsilon K}\right)
	    \label{eq:part2}
    \end{flalign}
    Thus, putting Eq.~\ref{eq:part1} and Eq.~\ref{eq:part2} together, we get that $\mathbb{P} \left[  \{  \hat{m} < \lambda \} \cup \bigcup_{ i \in \{\hat{X}_1, \hat{X}_2, \ldots \hat{X}_{\hat{m}}\}} \{ \mu_i < \zeta\} \right]  \le \delta$. Next, we consider the case when the number of good arms $m$ is less than $\lambda$ and show that $\mathbb{P}\left[ \hat{m} \ge \lambda\right] \le \delta$. Since there are at most $\lambda$ good arms, the event $\{ \hat{m} > \lambda \}$ implies that one of the output arms $j \in \{\hat{X}_1, \hat{X}_2, \ldots \hat{X}_{\lambda} \}$ is such that there exists some index $j$ such that $\hat{X}_j$ is a bad arm. Thus, we have that,
    \begin{align}
        \mathbb{P} \left[ \hat{m} \ge \lambda \right] \nonumber
        &\le \sum_{j \in [K]\setminus[m]} \mathbb{P} [j \in \{\hat{X}_1, \hat{X}_2, \ldots \hat{X}_{\lambda} \}] \nonumber \\
        &\le (K-m)c_\varepsilon \left(\frac{\delta}{c_\varepsilon K}\right)^{1+\varepsilon} \nonumber \\
        &\le \frac{K-m}{K} c_\varepsilon \left(\frac{\delta}{c_\varepsilon}\right) \nonumber \\
        &\le \delta
    \end{align}
    We see that the algorithm is $(\lambda, \delta)$-correct for all such $\lambda \in [K]$, thereby giving us that the algorithm is $\delta$-correct.
\end{proof}

\subsubsection{Proof of Theorem \ref{thm:active-set-lilhdoc}}
\label{subsubsec:proof-active-set-lilhdoc}
\begin{proof}
    Recall that the threshold $\zeta = 0$ and problem instance $\boldsymbol{S}_\mathcal{E}$ is such that $S_\mathcal{E}(\rho_1) \ge S_\mathcal{E}(\rho_2) \ldots > S_\mathcal{E}(\rho_{K-m}) > 0 > S_\mathcal{E}(\rho_{K-m+1}) \ldots > S_\mathcal{E}(\rho_K)$, with $m$ being unknown. Let us consider the case that the event $\mathcal{B}$ described in Lemma \ref{lemma:delta-PAC-SUCCELIM} holds. As outlined in Algorithm \ref{alg:lilhdoc}, an arm $i$ will be dropped if $\text{LCB}_{i}(t) > 0$. That is,
    \begin{align*}
        &\hat{S}_{i,N_{i}(t)} - U\left(N_{i}(t), \frac{\delta}{c_\varepsilon K}\right) > 0 \\
        &\hat{S}_{i,N_{i}(t)} - |\hat{S}_{i,N_i(t)} - S_{i} | > 0 \\
        &\implies S_{i} > 0
    \end{align*}
     Thus, as long as event $\mathcal{B}$ holds, all the arms that have $S_\mathcal{E} < 0$ will not dropped. Thus the lil'HDoC algorithm identifies all the arms correctly.
\end{proof}

\subsection{Integrating Error Mitigation in MAB Algorithms for Batch Entanglement Detection}
\label{subsec:err-mitig}

In the MAB-based workflow for entanglement detection described in Section \ref{sec:experiments}, one state is measured at every time instant as dictated by the sampling rule, and the statistics—namely, the estimates of $f_1, f_2, f_3,$ and $f_4$—are updated as new measurement outcomes are obtained. These estimates are susceptible to measurement errors, particularly readout errors, which induce inaccuracies in the measurement counts. To improve the accuracy of the estimates, we characterize such errors and wish to mitigate them \citep{qiskit_readout_mitigation}. 
%To this end, we carry out a preliminary investigation of three questions:
%(a) how is error mitigation done, (b) when can error mitigation be done as part of the MAB routine, and (c) what is the impact of error mitigation on the overall copy complexity of batch entanglement detection.
%
To this end, we carry out a preliminary investigation by incorporating a procedure for (a) error mitigation and (b) including error mitigation in the MAB routine, and study the impact of error mitigation on the overall copy complexity of batch entanglement detection.

\subsubsection{Procedure for Error Mitigation}
\label{subsub:how-err-mitig}
In Fig.~\ref{fig:WBM-circ}, we apply a unitary transformation to WBM $\mathcal{E}_1$ and $\mathcal{E}_2$ to measure the state of the system $\rho$ in the computational (Pauli Z) basis. Consequently, we obtain expectation values of the diagonal Pauli operators $ZZ$, $ZI$, and $IZ$. The estimates of $f_1$, $f_2$, $f_3$, and $f_4$ are linear combinations of these expectation values.
\begin{align}
f_1 &= 0.25\left(1 + \langle IZ \rangle_{\rho} + \langle ZI \rangle_{\rho} + \langle ZZ \rangle_{\rho} \right) \nonumber \\
f_2 &= 0.25 \left(1 - \langle IZ \rangle_{\rho} + \langle ZI \rangle_{\rho} - \langle ZZ \rangle_{\rho} \right) \nonumber \\
f_3 &= 0.25 \left(1 + \langle IZ \rangle_{\rho} - \langle ZI \rangle_{\rho} - \langle ZZ \rangle_{\rho} \right) \nonumber \\
f_4 &= 0.25 \left(1 - \langle IZ \rangle_{\rho} - \langle ZI \rangle_{\rho} + \langle ZZ \rangle_{\rho} \right).
\label{eq:f-to-pauli-relation}
\end{align}
Thus, it is essential to obtain precise expectation values for the diagonal Pauli operators to improve the accuracy of our estimates. To do this, we use a LocalReadOut scheme from IBM's Qiskit Experiments library ~\citep{Bravyi21}. In this scheme we characterize the readout errors of physical qubits on the \textbf{FakeBrisbane} backend.  
%We assume that these errors are independent across qubits.
These errors are assumed to be local in the sense that they are independent across qubits.
Readout error mitigation uses a mitigator object (matrix) computed from an assignment matrix $A$, where each element $A_{i,j}$ represents the probability of observing outcome $i$ when the true outcome is $j$. By applying this mitigator to unmitigated measurement counts, we refine our estimates by obtaining more accurate expectation values for $ZZ$, $ZI$, and $IZ$.

\subsubsection{How and where does it fit in the MAB Routine?}
\label{subsub:when-err-mitig}

In each round of the MAB policy, based on an Upper Confidence Bound (UCB) score, the sampling rule selects a quantum state to measure. Since only a single-shot measurement is performed per round, the error mitigation procedure described in Section \ref{subsub:how-err-mitig} is applied after a state has been measured several times.  To illustrate this process, consider a specific round $t = F$, where state $\rho_1$ has previously been measured $T^\star$ times. The unmitigated measurement \textit{counts} for the four possibrle outcomes are denoted as $F^{\text{um}}_1, F^{\text{um}}_2, F^{\text{um}}_3$, and $F^{\text{um}}_4$. The empirical frequencies of these outcomes are given by,  
\begin{equation}
\hat{f}^{\text{um}}_i(F) = \frac{F^{\text{um}}_i}{T^\star}, \quad i \in [4].
\end{equation}  
At this point, we invoke the error mitigation routine, supplying it with the unmitigated counts $\{F^{\text{um}}_i\}$ as input. The mitigation routine corrects for readout errors and returns mitigated expectation values of the diagonal Pauli observables, yielding mitigated estimates $\hat{f}^{\text{m}}_i(F)$. With post-processing adjustments to correct for decimal rounding errors, the corresponding mitigated measurement counts,
\begin{equation}
F^{\text{m}}_i = \hat{f}^{\text{m}}_i(F) \times T^\star, \quad i \in [4].
\end{equation}  
%If the same state is measured again in a subsequent round $t = T^\star + n$, then the updates to the empirical frequencies are performed using the mitigated counts $\{F^{\text{m}}_i\}$ instead of the unmitigated counts $\{F^{\text{um}}_i\}$. This is to ensure that future decision-making is based on error-mitigated data. 
%The key question is how frequently error mitigation should be applied—specifically, after how many samples it should be invoked and how many times it should be performed. 
%\VS{This procedure raises the question of deciding a frequency for invoking mitigation, which we discuss next.}

We propose a \textbf{nested mitigative process} where the MAB algorithm invokes the error mitigation routine once every $F$ measurement shots per state and uses the mitigated values in subsequent shots. For instance, at $t = F$, the routine produces mitigated estimates $\hat{f}^{\text{m}}_i(F)$ from which we obtain mitigated counts. 
%These mitigated counts are updated based on future measurement outcomes. 
Future measurement outcomes update on these mitigated counts.
At $t = 2F$, the routine takes input these new counts and outputs a new set of mitigated estimates $\hat{f}^{\text{m}}_i(2F)$. 
%but now computed from the updated mitigated counts. 
This creates a nested-mitigation cycle, where each round of mitigation refines the previous one. 

\begin{figure}[ht]
        \centering   \includegraphics[width=0.6\linewidth]{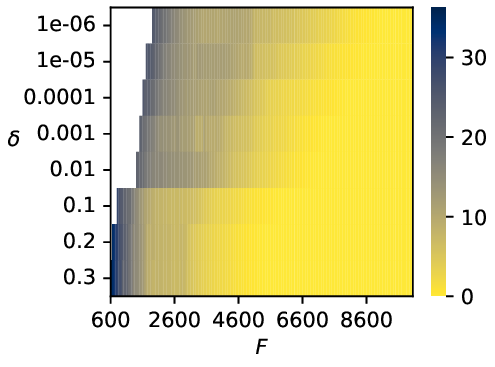}
        \caption{Heatmap of percentage error mitigation on FakeBrisbane backend for $\delta \in (0,1)$ and various mitigation frequencies} 
        \label{fig:heatmap}
\end{figure}

We conduct an empirical study to assess the impact of error mitigation on the average copy complexity of the MAB algorithm. Mitigation is invoked once every $F$ rounds, where $F$ ranges from 50 to 10,000 in steps of 50. Here, smaller $F$ values correspond to high-frequency mitigation and larger values indicate lower-frequency mitigation. For the problem instance described in Section~\ref{sec:experiments}, with $\delta \in (0,1)$ and range of $F$, we execute Algorithm~\ref{alg:workflow-for-bds} on FakeBrisbane, averaging the copy complexity at stoppage over 20 runs. The percentage of error mitigation is quantified as the relative reduction in copy complexity compared to the case without mitigation. To ensure the algorithm correctly identifies the entangled states, we employ an error indicator that verifies whether its error remains within the prescribed threshold $\delta$. Using this framework, we generate the heatmap in Fig.~\ref{fig:heatmap}, which visualizes the percentage reduction in copy complexity due to error mitigation. Notably, the white regions indicate cases where the algorithm converged in finite time but failed to identify the entangled states correctly. We observe and report the following inferences from Fig.~\ref{fig:heatmap}. First, the effect of mitigation is $\delta$-dependent. For larger values of $\delta$, the mitigation effect starts only as early as $(F=600)$ and stabilizes faster ($F \sim 4000$). In contrast, for smaller values of $\delta$, the effect of mitigation is prominent only mid-range and stabilizes at $F \sim 7000$. Second, for $F < 600$ and smaller values of $\delta$, the algorithm fails to detect the correct set of states under the prescribed $\delta$. This can be attributed to over-mitigation, which could potentially lead to random fluctuations in the estimates. Third, the observed stabilization zone (yellow) across values of $\delta$ suggests a critical threshold for $F$ beyond which reducing mitigation frequency (increasing the value of $F$) no longer reduces errors. 
%The notion of a critical threshold of $F$ also aligns with the fact that FakeBrisbane has a baseline error that can not be mitigated indefinitely. While we have not characterized this, one may avoid excessively large $F$ values, as they contribute to unnecessary computation without improving accuracy. 
%Thus, owing to such non-trivial dependencies between $F$ and $\delta$, the value of $F$ should be chosen to avoid over-mitigation at low $F$ and inefficiency at high $F$. A more practical strategy, in terms of resource utilization, would be to dynamically adjust the mitigation schedule by starting with frequent mitigation (low $F$) and reducing it (increasing $F$) as stability is achieved. 
It remains an open question to fully understand and optimize for the use of error-mitigation and integrate them with MAB strategies.

\end{document}